\newcommand {\reals} {\ensuremath{\mathbb{R}}}
\newtheorem{theorem}{Theorem}
\newtheorem{lemma}[theorem]{Lemma}
\newcommand{\degree}
{\ensuremath{^\circ}}
\title{Selection Lemmas for various geometric objects}
\author{Pradeesha Ashok \thanks{Department of Computer Science and Automation,
				Indian Institute of Science, Bangalore, India.
				Email :\texttt{pradeesha@csa.iisc.ernet.in}}
\and Ninad Rajgopal\thanks{Department of Computer Science and Automation,
				Indian Institute of Science, Bangalore, India.
				Email :\texttt{ninad.rajgopal@csa.iisc.ernet.in}}
\and Sathish Govindarajan\thanks{Department of Computer Science and Automation,
				Indian Institute of Science, Bangalore, India.
				Email :\texttt{gsat@csa.iisc.ernet.in}}}
\begin{document}
\maketitle
%
%
%
\begin{abstract}
Selection lemmas are classical results in discrete geometry that have been well studied and have
applications in many geometric problems like weak epsilon nets and slimming Delaunay triangulations. Selection lemma 
type results typically show that there exists a point that is contained in many objects that are induced (spanned) by an underlying point set. 

In the first selection lemma, we consider the set of all the objects induced (spanned) by a point set $P$. This question has been widely 
explored for simplices in $\reals^d$, with tight bounds in $\reals^2$. In our paper, we prove first selection lemma for other 
classes of geometric objects. We also consider the strong variant of this problem where we add the constraint that the piercing point 
comes from $P$. We prove an exact result on the strong and the weak variant of the first selection lemma for 
axis-parallel rectangles, special subclasses of axis-parallel rectangles like quadrants and slabs, disks (for centrally symmetric point sets). 
We also show non-trivial bounds on the first selection lemma for axis-parallel boxes and hyperspheres in $\mathbb{R}^d$.

In the second selection lemma, we consider an arbitrary $m$ sized subset of the set of all objects induced by $P$. We study this problem for axis-parallel rectangles and show that there exists an point in the plane that is contained in $\frac{m^3}{24n^4}$ rectangles. This is an improvement over the previous bound by 
Smorodinsky and Sharir~\cite{SHA04} when $m$ is almost quadratic.
\end{abstract}

\section{Introduction}


Let $P$ be a set of points in $\reals^d$. Consider the family of all objects $\mathcal{R}$ of a particular kind
(eg. hyperspheres, boxes, simplices, \dots)  such that each object in $\mathcal{R}$ has a distinct tuple of points from $P$ on
its boundary. For example, in $\reals^2$, $\mathcal{R}$ could be the family of $n \choose 3$ triangles such that each triangle has a
distinct triple of points of $P$ as its vertices. $\mathcal{R}$ is called the set of all objects induced (spanned) by $P$.
Various questions related to geometric objects induced by a point set have been studied in the last few decades. In this 
paper, we look at the problem of bounding the largest subset of $\mathcal{R}$ that can be hit/pierced by a single point.

Combinatorial results on these questions are referred as {\em Selection Lemmas} and are well studied. A classical result in 
discrete geometry is the {\em First Selection Lemma}~\cite{BF84} which shows that there exists a point that is present in 
$\frac{2}{9} \cdot \binom{n}{3}$ (constant fraction of) triangles induced by $P$. Bukh~\cite{BUK06} provides a simple and 
elegant proof of the above statement. Moreover, it is known that the constant in this result is tight~\cite{BUK10}. Interestingly, both~\cite{BF84, BUK10} use the centerpoint as the piercing point.

Let $P$ be a set of $n$ points in $\mathbb{R}^d$. A point $x \in \mathbb{R}^d$ is said to be a \emph{centerpoint} of $P$ if any halfspace that contains $x$ contains at least $n\over{d+1}$ points of $P$. Equivalently, $x$ is a centerpoint if and only if $x$ is contained in every convex object that contains more than $\frac{d}{d+1} n$ points of $P$. It has been proved that a centerpoint exists for any point set $P$ and the constant $\frac{d}{d+1}$ is tight~\cite{Rad46}. The centerpoint question has also been studied for special classes of convex objects like axis-parallel rectangles, halfplanes and disks~\cite{AAH09}. Another variant of the centerpoint called strong centerpoint, where the centerpoint is required to be an input point, has also been studied~\cite{AGK10}.

The first selection lemma has also been considered for simplices in $\reals^d$. This is an important result in discrete geometry 
and it has been used in the construction of weak $\epsilon$-nets for convex objects~\cite{MAT02}. B{\'a}r{\'a}ny~\cite{BAR82} 
showed that there exists a point $p \in \reals^d$ contained in at least $c_d \cdot \binom{n}{d+1} - O(n^d)$ simplices induced 
from $P$, where $c_d \geq \frac{1}{(d+1)^d}$. Wagner~\cite{WAG03} improved this bound to $c_d \geq \frac{d^2+1}{(d+1)^{d+1}}$. 
Gromov~\cite{GRO10} developed a new topological method which established an improved lower bound of $c_d \geq 
\frac{2d}{(d+1)!(d+1)}$. Furthermore, Karasev~\cite{KAR12} gave a simplified and elegant proof for Gromov's bound and Matousek 
et al.~\cite{MAT11} provided an exposition of the combinatorial components in Gromov's proof. For the upper bound, Bukh et 
al.~\cite{BUK10} showed that there exists a point set in $\reals^d$ such that no point is present in more than 
$(\frac{n}{d+1})^{d+1} + O(n^d)$ induced simplices i.e. $c_d \leq \frac{(d+1)!}{(d+1)^{(d+1)}}$. For $d=2$, this shows that 
the bound for $c_d$ is tight. Furthermore they conjectured that this bound was tight for $d \geq 3$. For the case of 
$\reals^3$, Basit et al.~\cite{BAS10} improved the lower bound for the first selection lemma in $\reals^3$ and showed that there exists a point present in $0.00227 \cdot 
n^4$ simplices (tetrahedrons) spanned by $P$ i.e. $c_3 \geq 0.05448$. Further improvements on $c_3$ were shown in~\cite{GRO10,KRAL12,MAT11}, with $c_3 \geq 0.07480$ being the best known lower bound~\cite{KRAL12}. 

A generalization of the first selection lemma, known as the {\em Second Selection Lemma}, considers an $m$-sized arbitrary subset
$\mathcal{S} \subseteq \mathcal{R}$ of distinct induced objects of a particular kind and shows that there exists a point which
is contained in $f(m,n)$ objects of $\mathcal{S}$. The second selection lemma has been considered for various objects like 
simplices, boxes and hyperspheres in $\reals^d$ ~\cite{ABFK92,ARO90,CHA94,SHA04}. Aronov et al.~\cite{ARO90} showed that for any 
set $P$ of $n$ points and any set $T$ of $t$ triangles induced by $P$, there exists a point $p$ in the interior of at least 
$f(t,n) = \frac{t^3}{2^9n^6\log^5n}$, when $t=n^{3-\alpha}, \alpha \leq 1$. Their motivation was to derive an upper bound on the 
number of halving planes of a finite set of points in $\reals^3$. Alon et al.~\cite{ABFK92} showed that, for any family $F$ of $\alpha \binom{n}{d+1}$ 
induced simplices, there exists a point contained in at least $c\alpha^{s_d} \binom{n}{d+1}$ simplices 
of $F$, where $c, s_d$ are constants.

Chazelle et al.~\cite{CHA94} looked at this problem for hyperspheres with the motivation of reducing the complexity of Delaunay 
triangulations for points in $\reals^3$. 
They proved a selection lemma for intervals in the line and then extended it for axis-parallel boxes in $\mathbb{R}^d$, by induction on dimension. This in turn was used for the proof of the selection lemma for 
diametrical spheres induced by a pair of points, by using the fact that any diametrical sphere induced by a pair of points would contain the corresponding induced axis-parallel box. This gave a bound of 
$\Omega\left(\frac{m^2}{n^2\log^{2d-2}(\frac{n^2}{m})}\right)$ for rectangular boxes in $d$ dimensions (and hence the diametrical 
hyperspheres as well) and was extended to $\Omega\left(\frac{m^2}{n^2\log^{2d}(\frac{n^2}{m})}\right)$ for general hyperspheres in $d$ 
dimensions.

Smorodinsky and Sharir~\cite{SHA04} improved the bounds obtained in \cite{CHA94} by using a probabilistic proof very similar
to the one used in the proof of Crossing lemma \cite{MAT02}. Note that this paper proved that the point which pierced a lot of
disks (pseudo-disks) and the $d$-dimensional hyperspheres came from $P$. In the case of the axis-parallel rectangles, 
they proved a lower bound of $\Omega(\frac{m^2}{n^2\log^2n})$ and an improved upper bound of 
$O(\frac{m^2}{n^2\log(\frac{n^2}{m})})$. However, in this case the piercing point could be any point in $\reals^2$.

As mentioned earlier, first selection lemma has been extensively studied for simplices in $\mathbb{R}^d$. However, no previous work is known on first selection
lemma for other geometric objects, to the best of our knowledge. In our paper, we explore the first selection lemma for other geometric objects like axis-parallel boxes and hyperspheres in $\mathbb{R}^d$. We call the case where the piercing point $p \in 
\reals^d$ (same as the previous literature) as the {\em weak variant}. We also consider the {\em strong variant} of the first selection lemma where we add 
the constraint that the piercing point $p \in P$. We prove an exact result on the strong and weak variant of the first selection lemma for axis-parallel rectangles, quadrants, slabs and disks (for centrally symmetric point sets). Note that the first selection lemma for triangles~\cite{BF84,BUK10} used the centerpoint as the piercing point to prove exact bounds. Interestingly, we also use the strong and weak centerpoint for the respective objects to prove our results in 
sections~\ref{sec_rec},~\ref{spl_rec} and ~\ref{disks}.

Let $P$ be a set of $n$ points in $\reals^d$ in general position i.e., no two points have the same coordinate in any dimension 
and no $d+2$ points lie on the same hypersphere. Let $\mathcal{F}$ be a family of objects induced by $P$. For any point $p$, 
let $\mathcal{F}_p \subseteq \mathcal{F}$ be the set of objects that contain $p$ and $f_p^\mathcal{F} = \vert \mathcal{F}_p \vert$. Let $s^\mathcal{F}(n)$ and $w^\mathcal{F}(n)$ denote the bounds for the strong and the weak 
variant of the first selection lemma for a family of objects $\mathcal{F}$. In particular,

\begin{equation*}
\begin{split}
s^\mathcal{F}(n) & = \displaystyle\min_{P,|P|=n} (\max_{p \in P} f_p^\mathcal{F}) \\
w^\mathcal{F}(n) & = \displaystyle\min_{P,|P|=n} (\max_{p \in \reals^d} f_p^\mathcal{F}) \\
\end{split}
\end{equation*}

Our results for the first selection lemma for various families of objects are summarized in Table \ref{results}.

\begin{table}[H]
\centering
\begin{tabular}{|c|c|c|c|c|c|}
\hline
\textbf{Family of Objects $\mathcal{F}$} &\textbf{Dim}  &\multicolumn{2}{c|}{\textbf{$s^\mathcal{F}(n)$}} 
                                                                &\multicolumn{2}{c|}{\textbf{$w^\mathcal{F}(n)$}} \\
\hline
& &Lower &Upper &Lower &Upper \\
& &Bound &Bound &Bound &Bound \\
\hline
Axis-parallel rectangles &2 &\multicolumn{2}{c|}{$n^2/16$} &\multicolumn{2}{c|}{$n^2/8$} \\
\hline
Axis-parallel boxes      &$d$ &\multicolumn{2}{c|}{-}        &$\frac{n^2}{2^{(2^d-1)}}$ &$\frac{n^2}{2^{(d+1)}}$ \\
\hline
Orthants                 &2 &\multicolumn{2}{c|}{$n^2/4$}  &\multicolumn{2}{c|}{$n^2/2$} \\
\hline
Axis-parallel slabs      &2 &\multicolumn{2}{c|}{$3n^2/8$} &\multicolumn{2}{c|}{$n^2/2$} \\
\hline
Skylines                 &2 &$n^2/9$  &$n^2/8$              &\multicolumn{2}{c|}{$n^2/4$} \\
\hline
Disks                    &2 &$n^2/16$ &$n^2/9$              &$n^2/6$ &$n^2/4$ \\

\hline
Disks (Centrally         &   &\multicolumn{2}{c|}{}        &\multicolumn{2}{c|}{} \\
Symmetric Point Sets)    &2  &\multicolumn{2}{c|}{$n^2/8$} &\multicolumn{2}{c|}{$n^2/4$} \\
\hline
Hyperspheres             &$d$ &\multicolumn{2}{c|}{-}        &$\frac{n^2}{2(d+1)}$ &$n^2/4$ \\

\hline
Hyperspheres (Centrally  &    &\multicolumn{2}{c|}{}        &\multicolumn{2}{c|}{} \\
Symmetric Point Sets)    &$d$ &\multicolumn{2}{c|}{-}        &\multicolumn{2}{c|}{$n^2/4$} \\
\hline
\end{tabular}
\caption{First selection lemma Bounds for various families of objects}
\label{results}
\end{table}

We next consider the second selection lemma for axis-parallel rectangles in $\reals^2$. We prove that there exists a point $p \in \reals^2$ 
that is contained in at least $\frac{m^3}{24n^4}$ axis-parallel rectangles of $\mathcal{S}$. This bound is an improvement over the previous 
bound in \cite{CHA94,SHA04} when $m = \Omega (\frac{n^2}{\log^2 n})$. We use an elegant double counting argument to obtain this result. 






In section \ref{sec_rec}, we prove exact results for strong and weak variants of first selection lemma for axis-parallel rectangles. 
Section \ref{spl_rec} proves tight or almost tight bounds for the strong and weak variants of first selection lemma for families 
of special rectangles like orthants, slabs and skylines. In section \ref{boxes}, we prove bounds for the weak variant of first 
selection lemma for boxes in $\mathbb{R}^d$. In section \ref{disks}, we prove bounds for the strong variant of first selection 
lemma for induced disks in $\mathbb{R}^2$ and prove bounds for the weak variant of first selection lemma for hyperspheres in 
$\mathbb{R}^d$. Section \ref{ssl} proves improved bounds for second selection lemma for axis-parallel rectangles.

\section{Rectangles}
\label{sec_rec}

In this section, we prove the first selection lemma for axis-parallel rectangles. Let $R(u,v)$ be the axis-parallel rectangle induced by $u$ 
and $v$ where $ u,v \in P$ i.e., $R(u,v)$ has $u$ and $v$ as diagonal points. Let $\mathcal{R}$ be the set of all induced axis-parallel 
rectangles $R(u,v)$ for all $u,v \in P$. Let $p$ be any point and  $v$ and $h$ be the vertical and horizontal lines passing through $p$, 
dividing the plane into four quadrants as shown in figure \ref{rectangle}. Let $\vert A \vert$ represent $\vert A \cap P \vert$ 
(similar for all quadrants). $\mathcal{R}_p$ consists of exactly those rectangles which are induced by a pair of points present in diagonally 
opposite quadrants.

\subsection{Weak variant}
In this section, we obtain tight bounds for $w^\mathcal{R}(n)$ .
\begin{theorem}
\label{thm:selectionRectanglesLower}
$w^\mathcal{R}(n) = \frac{n^2}{8}$.
\end{theorem}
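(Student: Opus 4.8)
The plan is to establish the two inequalities $w^{\mathcal R}(n)\ge \tfrac{n^2}{8}$ and $w^{\mathcal R}(n)\le \tfrac{n^2}{8}$ separately. I use throughout the observation already made in the section: for a point $p$ with induced quadrants $Q_1,Q_2,Q_3,Q_4$ (labelled cyclically, so that $Q_1,Q_3$ and $Q_2,Q_4$ are the diagonally opposite pairs) one has $f_p^{\mathcal R}=|Q_1\cap P|\cdot|Q_3\cap P|+|Q_2\cap P|\cdot|Q_4\cap P|$; I write $|Q_i|$ for $|Q_i\cap P|$.

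For the lower bound I would place $p$ at the intersection of a vertical and a horizontal median line of $P$. By general position the vertical line can be chosen with exactly $n/2$ points of $P$ strictly to each side, and likewise the horizontal one; then $p$ avoids $P$ and the four ``half-plane'' equations $|Q_1|+|Q_2|=|Q_2|+|Q_3|=|Q_3|+|Q_4|=|Q_4|+|Q_1|=\tfrac n2$ force $|Q_1|=|Q_3|=:a$, $|Q_2|=|Q_4|=:b$ with $a+b=\tfrac n2$. Hence $f_p^{\mathcal R}=a^2+b^2\ge\tfrac12(a+b)^2=\tfrac{n^2}{8}$, and this holds for every $P$ (for odd $n$ the same argument costs only an additive $O(n)$).

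For the upper bound I would exhibit a single configuration on which no point of $\reals^2$ pierces more than $\tfrac{n^2}{8}$ induced rectangles. Take two interleaved anti-diagonal chains: $n/2$ points on the line $y=x$ with $x$-coordinates $1,3,\dots,n-1$, and $n/2$ points on a line of slope $-1$ with $x$-coordinates $2,4,\dots,n$, arranged so that the multiset of $y$-coordinates of $P$ is exactly $\{1,2,\dots,n\}$; this $P$ is invariant under the reflection $x\leftrightarrow y$. For a candidate point $p$ let $k$ (resp.\ $\ell$) be the number of points of $P$ with smaller $x$-coordinate (resp.\ $y$-coordinate) than $p$, and let $z=|Q\cap P|$ for $Q$ the lower-left quadrant at $p$; then $f_p^{\mathcal R}=2z^2-2z(k+\ell)+zn+k\ell$, a convex quadratic in $z$. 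For this $P$ one checks (up to an additive $O(1)$ from rounding) that $z=\tfrac12\min(k,\ell)+\max\!\big(0,\tfrac12(k+\ell-n)\big)$; substituting, and using the symmetry to assume $k\le\ell$, the expression collapses to $\tfrac12 k(n-k)$ when $k+\ell\le n$ and to $\tfrac12 \ell(n-\ell)$ when $k+\ell\ge n$, each of which is at most $\tfrac{n^2}{8}$, with equality only at the ``centre'' $k=\ell=n/2$. Combined with the lower bound this yields $w^{\mathcal R}(n)=\tfrac{n^2}{8}$.

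The substitution is routine algebra; the step that needs care — and which I expect to be the main obstacle — is making the upper estimate hold \emph{uniformly} over all $p\in\reals^2$: one must handle $p$ sharing a coordinate with a point of $P$ (so a point lies on a dividing line and the open/closed quadrant bookkeeping shifts), $p$ far outside the active stretch of the chains, and the $O(1)$ rounding term in $z$, which must be shown not to push $f_p^{\mathcal R}$ past $\tfrac{n^2}{8}$ (or one chooses the configuration, e.g.\ $4\mid n$ with the chains placed symmetrically, so that the count is exact). A slicker, if less elementary, alternative is to take $P$ equally spaced on a circle centred at the origin, rotated so each open quadrant at the origin holds $n/4$ points: for an interior point $p$ the four quadrant counts are proportional to the four arcs cut off by the vertical and horizontal lines through $p$, and a short optimization shows $|Q_1||Q_3|+|Q_2||Q_4|$ is maximised, at value $\tfrac{n^2}{8}$, exactly when $p$ is the centre.
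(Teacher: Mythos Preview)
Your lower bound is essentially the paper's: the ``weak centerpoint for rectangles'' invoked there is precisely the intersection of horizontal and vertical median lines, and both arguments reduce to $f_p^{\mathcal R}=a^2+b^2$ with $a+b=\tfrac{n}{2}$, hence $\ge \tfrac{n^2}{8}$.

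For the upper bound, it is your \emph{alternative} --- points equally spaced on a circle --- that coincides with the paper's construction. The paper parametrises an arbitrary interior $p$ by the counts $a,b$ of points lying between $p$ and the centre in each coordinate direction; by the circular symmetry the four quadrant sizes become $\tfrac{n}{4}\pm a\pm b$, and one line of algebra gives $f_p^{\mathcal R}=\tfrac{n}{8}^{\!2}\cdot 8-2(a^2+b^2)=\tfrac{n^2}{8}-2(a^2+b^2)\le\tfrac{n^2}{8}$, with no case analysis and no rounding. Your primary construction (two interleaved chains) is correct --- the formula for $z$ and the collapse to $\tfrac12 k(n-k)$ or $\tfrac12\ell(n-\ell)$ check out --- but, as you anticipate, it carries genuine bookkeeping: an $O(1)$ error in $z$ propagates through the quadratic to an $O(n)$ error in $f_p^{\mathcal R}$, so nailing the exact constant still requires a parity/boundary analysis you have not yet done. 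The chain construction buys nothing over the circle here; lead with the circle and drop the chains.
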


\begin{proof}
Let $p$ be the weak centerpoint for rectangles \cite{AAH09}. We claim that $f^\mathcal{R}_p \geq \frac{n^2}{8}$.

\begin{figure}[h]
\begin{minipage}[h]{0.5\linewidth}
\centering
\input{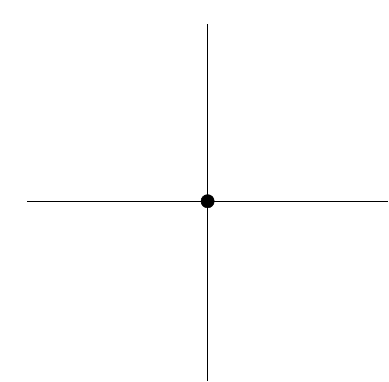_t}
\caption {Lower bound}
\label {fig:lowerBound}
\end{minipage}
\hspace{1 pt}
\begin{minipage}[h]{0.5\linewidth}
\centering
\input{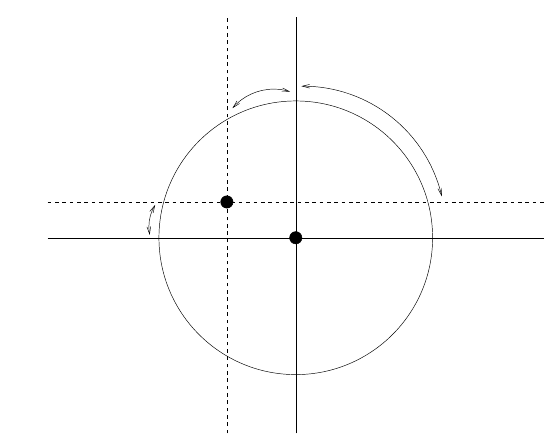_t}
\caption {Upper bound construction}
\label {fig:upperBoundWeak}
\end{minipage}
\end{figure}

Let $p$ divide the plane into four quadrants as shown in figure \ref{fig:lowerBound}. W.l.o.g let the top left quadrant contain $(\frac{n}{4}+x)$ points. Therefore, the remaining 
points are distributed among the three other quadrants. Then,

\begin{eqnarray*}
f^\mathcal{R}_p  &=& \left(\frac{n}{4} - x\right)^2 + \left(\frac{n}{4} + x\right)^2  \\
                 &=& 2 \cdot \left(\frac{n^2}{16}\right) + 2 \cdot x^2 \\
\end{eqnarray*}

Thus, $f^\mathcal{R}_p \geq \frac{n^2}{8}$. Therefore, $w^\mathcal{R}(n) \geq \frac{n^2}{8}$. \\

For the upper bound, consider a set $P$ of $n$ points uniformly arranged along the boundary of a circle. Let $h$ and 
$v$ be horizontal and vertical lines that bisect $P$, intersecting at $o$. W.l.o.g, let $p$ be any point inside the circle in 
the top left quadrant and let $h_1$ and $v_1$ be the horizontal and vertical lines passing through $p$. Let $a$ be the number 
of points from $P$ below $h_1$ that is present in the top left quadrant defined by $h$ and $v$. Similarly, let $b$ be the 
number of points from $P$ to the right of $v_1$ that is present in the top left quadrant defined by $h$ and $v$. The number 
of points in each of the four quadrants defined by $h_1$ and $v_1$ is as shown in figure \ref{fig:upperBoundWeak}.
\begin{eqnarray*}
f^\mathcal{R}_p  &=& \left(\frac{n}{4}-b+a\right) \cdot \left(\frac{n}{4}-a+b\right) + \left(\frac{n}{4}-a-b\right) \cdot \left(\frac{n}{4}+a+b\right)  \\
                 &=& \frac{n^2}{8} - 2(a^2+b^2) \\
\end{eqnarray*}
Since $a,b \geq 0$, $f^\mathcal{R}_p \leq \frac{n^2}{8}$ for all points $p \in \mathbb{R}^2$. Therefore, $w^\mathcal{R}(n) \leq \frac{n^2}{8}$.
\end{proof}



\subsection{Strong variant}
In this section, we obtain exact bounds for $s^\mathcal{R}(n)$.

\begin{theorem}\label{strong_rec}
\label{strong_rect}
$s^\mathcal{R}(n) =\frac{n^2}{16}$
\end{theorem}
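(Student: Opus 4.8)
The plan is to prove the matching bounds $s^\mathcal{R}(n)\ge\tfrac{n^2}{16}$ and $s^\mathcal{R}(n)\le\tfrac{n^2}{16}$ (up to lower‑order terms), following the template of the weak case but replacing the median cross by the \emph{strong} centerpoint for rectangles. Recall that for $p\in P$, if $a,b,c,d$ are the numbers of points of $P$ in the open quadrants $NW,NE,SE,SW$ of $p$, then $a+b+c+d=n-1$ and $f^\mathcal{R}_p=ac+bd$, while the four pairwise sums $a+b$, $b+c$, $c+d$, $d+a$ are exactly the numbers of points of $P$ above, right of, below and left of $p$. For the lower bound I would take $p\in P$ to be a strong centerpoint for axis‑parallel rectangles \cite{AGK10}: a point of $P$ such that no horizontal or vertical line through $p$ has more than $\tfrac{3n}{4}$ points of $P$ on either side (equivalently, the $x$‑rank and $y$‑rank of $p$ both lie in $[\tfrac n4,\tfrac{3n}4]$); such a point always exists, by pigeonhole, since the set of admissible ranks has size more than $n/2$. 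Then all four pairwise sums lie in $[\tfrac n4-O(1),\tfrac{3n}4+O(1)]$, and it remains to show the purely algebraic fact that $ac+bd\ge\tfrac{n^2}{16}-O(n)$ under these constraints. I would do this by fixing $s_1=a+b$ and $s_3=d+a$ (which also fixes the remaining two sums), minimizing $ac+bd$ over the single remaining free parameter $a$ — the minimum is at $a=\tfrac{s_1+s_3}{2}-\tfrac n4$, which one checks always lies in the feasible interval — to obtain $ac+bd\ge\tfrac12\big(s_1(n-s_1)+s_3(n-s_3)\big)-\tfrac{n^2}{8}$; since $s\mapsto s(n-s)$ is concave, its minimum over $[\tfrac n4,\tfrac{3n}4]$ is attained at an endpoint and equals $\tfrac{3n^2}{16}$, giving $ac+bd\ge\tfrac12\big(\tfrac{3n^2}{16}+\tfrac{3n^2}{16}\big)-\tfrac{n^2}{8}=\tfrac{n^2}{16}$ (attained by $(a,b,c,d)=(\tfrac n2,\tfrac n4,0,\tfrac n4)$ and its symmetric variants).

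For the upper bound I would place $\tfrac n2$ points of $P$ on each of two parallel anti‑diagonal staircases, one near the lower‑left and one near the upper‑right, positioned so that no box spanned by one point of each staircase contains a third point of $P$. For the $i$‑th point $p$ of either staircase, one opposite quadrant pair has sizes $0$ and $\tfrac n2$ (all of the other staircase) while the other pair has sizes $i-1$ and $\tfrac n2-i$; hence $f^\mathcal{R}_p=(i-1)(\tfrac n2-i)\le\tfrac{n^2}{16}$ for every $p\in P$, and so $s^\mathcal{R}(n)\le\tfrac{n^2}{16}$. (The circular construction of Theorem~\ref{thm:selectionRectanglesLower}, with the piercing point restricted to $P$, works equally well: the point at angular position $\phi\in[0,\tfrac\pi2]$ has $|NE_p|=0$, $|SW_p|\approx\tfrac n2$, $|NW_p|\approx\tfrac n2-\tfrac{\phi n}{\pi}$ and $|SE_p|\approx\tfrac{\phi n}{\pi}$, so $f^\mathcal{R}_p\approx n^2\,t(\tfrac12-t)$ with $t=\phi/\pi$, maximized at $t=\tfrac14$ with value $\tfrac{n^2}{16}$, and by the dihedral symmetry of the point set this holds at every point of $P$.)

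I expect the main obstacle to be the lower bound: seeing that the guarantee supplied by the strong centerpoint reduces exactly to ``all four quadrant pair‑sums lie in $[\tfrac n4,\tfrac{3n}4]$'', and then carrying out the constrained minimization of the bilinear form $ac+bd$ so that it lands precisely on $\tfrac{n^2}{16}$ — which is what makes the bound tight against the construction. A secondary point of care is bookkeeping the $O(n)$ lower‑order terms, so that the stated equality $s^\mathcal{R}(n)=\tfrac{n^2}{16}$ is read in the leading‑order sense, exactly as in the weak variant.
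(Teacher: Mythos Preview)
Your proposal is correct and follows essentially the same strategy as the paper: for the lower bound, pick the strong centerpoint for axis-parallel rectangles and minimize $ac+bd$ subject to the halfplane constraints $a+b,\,b+c,\,c+d,\,d+a\in[\tfrac n4,\tfrac{3n}4]$; for the upper bound, use an explicit point set (the paper uses exactly the circular configuration you mention second). The only real difference is in execution: the paper carries out the constrained minimization by a two-case analysis on whether the opposite quadrant $C$ is above or below $n/4$, whereas your parametrization by the two pair-sums $s_1,s_3$ and minimization over the remaining free variable gives a cleaner, case-free route to the same value $\tfrac{n^2}{16}$; your two-staircase construction is a valid alternative to the circle and yields the identical bound.
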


\begin{proof}

\begin{figure}[h]
\begin{minipage}[h]{0.45\linewidth}
\centering
\input{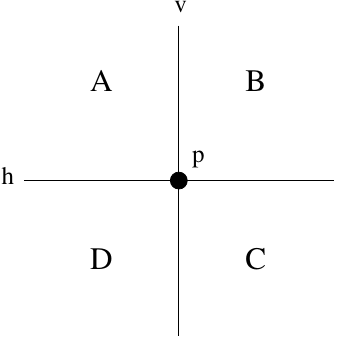_t}
\caption{Lower bound}
\label{rectangle}
\end{minipage}
\hspace{1 pt}
\begin{minipage}[h]{0.45\linewidth}
\centering
\input{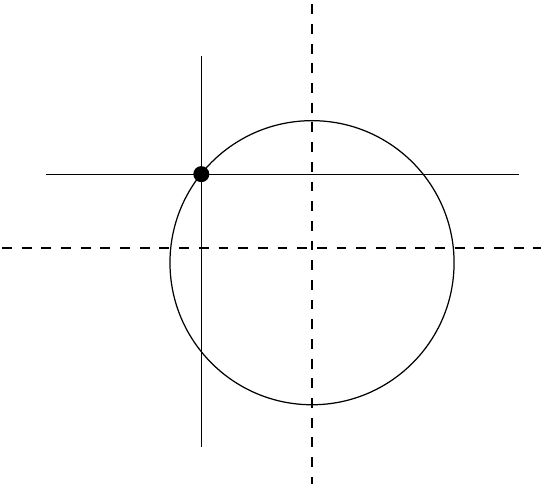_t}
\caption {Upper bound construction}
\label {fig:upperBoundStrong}
\end{minipage}
\centering
\end{figure}

Let $p$ be the strong centerpoint of $P$ w.r.t axis-parallel rectangles. Then any axis-parallel rectangle that contains more 
than $\frac{3n}{4}$ points from $P$ contains $p$ \cite{AGK10}. We claim that $p$ is contained in at least $\frac{n^2}{16}$ 
rectangles from $\mathcal{R}$.  

Let $p$ partition $P$ into four quadrants as 
shown in figure \ref{rectangle}. If $\vert A \vert,
\vert C \vert \geq \frac{n}{4}$, then $p$ is contained in at least $\frac{n^2}{16}$ rectangles from $\mathcal{R}$. Therefore, assume 
$\vert A \vert=\frac{n}{4}-x$. Now, there are two cases.
\\

\noindent Case 1. $\vert C \vert \leq \frac{n}{4}$ : W.l.o.g, assume that $\vert C \vert =\frac{n}{4}-y$ and $x \geq y$. Therefore 
$\vert B \cup D \vert =\frac{n}{2}+x+y$. The value of $f^\mathcal{R}_p$ is minimized when the value of $\vert B \vert \times \vert D \vert$ 
is minimized. Since $\vert A \vert=\frac{n}{4}-x$ and there can be at most $\frac{3n}{4}$ points on either sides of $h$ and $v$, 
both $B$ and $D$ contain at least $x$ points. Therefore, $f^\mathcal{R}_p$ is minimized when $\vert B \vert =\frac{n}{2}+y$ and $\vert D \vert =x$. 
Then,
\begin{eqnarray*}
f^\mathcal{R}_p &\geq& \left(\frac{n}{4}-x\right)\left(\frac{n}{4}-y\right)+\left(\frac{n}{2}+y\right)x \\
      &\geq& \frac{n^2}{16}\\
\end{eqnarray*}

\noindent Case 2. $\vert C \vert > \frac{n}{4}$ : Assume $\vert C \vert =\frac{n}{4}+y $. Therefore $\vert B \cup D \vert = \frac{n}{2}+x-y$.
By similar reasons as in case 1, the value of $f^\mathcal{R}_p$ is minimized when  $\vert B \vert =\frac{n}{2}-y$ and $\vert D \vert =x$. Therefore,
\begin{eqnarray*}
f^\mathcal{R}_p &\geq& \left(\frac{n}{4}-x\right) \left(\frac{n}{4}+y\right)+\left(\frac{n}{2}-y\right)x \\
     &\geq& \frac{n^2}{16}-2xy+\frac{n}{4} (x+y)\\
\end{eqnarray*}
The value of $f^\mathcal{R}_p$ is minimized when $\frac{n}{4}(x+y)-2xy$ is minimized. Since $\vert A \cup B \vert \geq 
\frac{n}{4}$ and $x+y \leq \frac{n}{2}$, this value is minimized when $x=y=\frac{n}{4}$. Thus, $s^\mathcal{R}(n) \geq \frac{n^2}{16}$.

For the upper bound, consider a set $P$ of $n$ points arranged uniformly along the 
boundary of a circle as in figure \ref{fig:upperBoundStrong}. Now, we claim that any point $p \in P$ is contained in at most 
$\frac{n^2}{16} $ rectangles of $\mathcal{R}$. W.l.o.g, let $p$ be a point in the top left quadrant of the circle that is $k$ 
points away from the topmost point in $P$. Let $h$ and $v$ be the horizontal and vertical lines passing through $p$. $h$ and $v$ 
divide the plane into four quadrants. Therefore $f^\mathcal{R}_p =(\frac{n}{2}-2k)2k=nk-4k^2$. This value is maximized when $k=\frac{n}{8}$. 
Thus, $s^\mathcal{R}(n) \leq \frac{n^2}{16}$.

\end{proof}

\section{Special Rectangles}
\label{spl_rec}
In this section, we prove bounds for the first selection lemma for some special families of axis-parallel rectangles.

Let $p$ be any point and  $v$ and $h$ be the vertical and horizontal lines passing through $p$, dividing the plane into four quadrants as shown in  
figure \ref{rectangle}. Let $\vert A \vert$ represent $\vert A \cap P \vert$(similar for all quadrants).

\subsection{Quadrants}
Quadrants are infinite regions defined by two mutually orthogonal halfplanes. We consider induced quadrants of a fixed orientation as shown in figure~\ref{ind_orth}. If two points are in monotonically decreasing position, then the induced quadrant is defined by two rays passing through the points (see figure~\ref{ind_orth}(a)). Otherwise, the quadrant is anchored at the point with the smaller $x$ and $y$ co-ordinate and the other point is contained in the quadrant (see figure~\ref{ind_orth}(b)). In this case, the same quadrant may be induced by different point pairs.
Let $\mathcal{O}$ represent the family of quadrants induced by a point set. Note that the family of all induced quadrants is a multiset.
\begin{figure}[h]
\begin{center}
\includegraphics[scale=0.5]{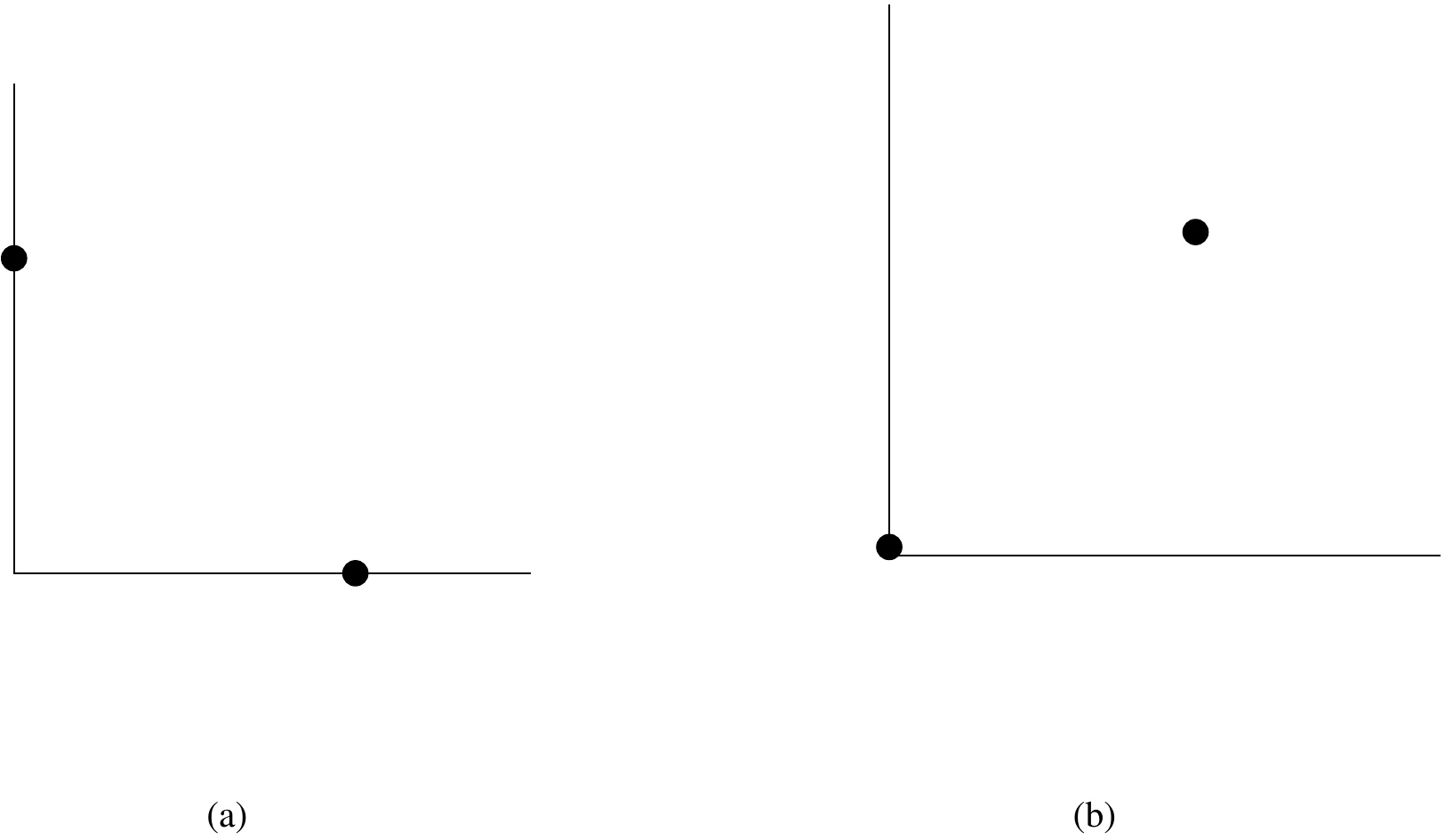}
\caption{Quadrants induced by two points}
\label{ind_orth}
\end{center}
\end{figure}

The weak variant of the first selection lemma is
trivial. Let us take the point $(x_{max}, y_{max})$, where $x_{max}$ and $y_{max}$ are the maximum values of the $x$ and $y$
coordinates of $P$. It is easy to see that this point is present in all the induced quadrants i.e. $w^\mathcal{O}(n) =
\frac{n^2}{2}$.

We also prove a tight bound for the strong variant.
\begin{lemma}\label{orth_cp}
 For any point set $P$ of $n$ points, there exists $p \in P$ such that $p$ is contained in all quadrants that contain more than $\frac{n}{2}$ points from $P$. 
\end{lemma}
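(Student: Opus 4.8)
The plan is to produce the required point $p\in P$ explicitly as a point that is ``past the median'' in both coordinates, and then show that any quadrant with more than $n/2$ points of $P$ cannot avoid such a point.

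First I would pin down the shape of the objects. By the definition of induced quadrants (and, after fixing the orientation without loss of generality), every quadrant in question is a ``north–east'' orthant of the form $Q_{a,b}=\{(x,y):x\ge a,\ y\ge b\}$; the apex of $Q(u,v)$ sits at $(\min(x_u,x_v),\min(y_u,y_v))$, and in fact the argument works for every orthant of this orientation, not only the induced ones. The key elementary observation is the following description of non-membership: for $q\in P$, we have $q\notin Q_{a,b}$ iff $x_q<a$ or $y_q<b$. In the first case every point of $P\cap Q_{a,b}$ has $x$-coordinate $\ge a>x_q$, so $|P\cap Q_{a,b}|\le |\{q'\in P: x_{q'}>x_q\}|$; symmetrically, in the second case $|P\cap Q_{a,b}|\le |\{q'\in P: y_{q'}>y_q\}|$.

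Next I would choose $p$. Using general position (distinct $x$-coordinates and distinct $y$-coordinates), let $S_x$ be the set of the $\lfloor n/2\rfloor+1$ points of $P$ with largest $x$-coordinate, and define $S_y$ analogously for the $y$-coordinate. Then $|S_x|+|S_y|=2\lfloor n/2\rfloor+2>n$, so $S_x\cap S_y\neq\emptyset$; pick any $p\in S_x\cap S_y$. For this $p$, at most $\lfloor n/2\rfloor$ points of $P$ have $x$-coordinate larger than $x_p$, and at most $\lfloor n/2\rfloor$ points have $y$-coordinate larger than $y_p$. Combining with the observation above: if $p\notin Q_{a,b}$ then $|P\cap Q_{a,b}|\le\lfloor n/2\rfloor\le n/2$. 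Contrapositively, every quadrant containing more than $n/2$ points of $P$ contains $p$, which is exactly the claim.

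I do not expect a genuine obstacle here; the proof is short once the ``non-membership forces a coordinate bound'' observation is isolated. The only thing to handle carefully is the off-by-one bookkeeping — the strict inequality ``more than $n/2$'' versus $\le\lfloor n/2\rfloor$, and the parity of $n$ in the counting $2\lfloor n/2\rfloor+2>n$ that guarantees $S_x\cap S_y\neq\emptyset$ (it gives an intersection of size $2$ for even $n$ and size $1$ for odd $n$). The tightness of the constant is not needed for this lemma; it will be established later when the lemma is fed into the strong first selection bound $s^{\mathcal O}(n)=n^2/4$.
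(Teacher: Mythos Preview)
Your proof is correct and is essentially the same argument as the paper's. The paper chooses a vertical line $v$ with $\frac{n}{2}-1$ points to its left and a horizontal line $h$ with $\frac{n}{2}-1$ points below it, shows the upper-right cell $B$ is nonempty, and takes $p\in B$; this $B$ is exactly your $S_x\cap S_y$, and the paper's ``a missing quadrant lies entirely to the right of $p$ or entirely above $p$'' is your non-membership observation. Your pigeonhole via $|S_x|+|S_y|>n$ and your explicit handling of $\lfloor n/2\rfloor$ for both parities are slightly cleaner than the paper's presentation, but the idea is identical.
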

\begin{proof}
Let $h$ be a horizontal line such that it has $\frac{n}{2}-1$ points of $P$ below it and $v$ be a vertical line that contains $\frac{n}{2}-1$ points of $P$ to the left of it. $h$ and $v$ divide $P$ into four quadrants as shown in figure \ref{rectangle}.  By construction, $\vert B \vert +\vert C \vert =\frac{n}{2}+1$ and $\vert C \vert \leq \frac{n}{2}-1$. Therefore $ B \cap P \ne \emptyset$. 

Let $p\in P$ be any point in $B$. Clearly any quadrant that does not contain $p$ lies completely to the right of $p$ or completely above $p$ and therefore contains at most $\frac{n}{2}$ points. Therefore, any quadrant that contains more than $\frac{n}{2}$ points from $P$ contains $p$.
%



\end{proof}

\vspace{3pt}
\begin{theorem}
$s^\mathcal{O}(n)=\frac{n^2}{4}$
\end{theorem}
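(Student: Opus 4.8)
The plan is to prove the two matching bounds separately, in the spirit of Theorem~\ref{strong_rect}: the lower bound from the strong centerpoint for quadrants of Lemma~\ref{orth_cp}, the upper bound from an explicit monotone point configuration.

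For the lower bound I would use the point $p\in P$ that the proof of Lemma~\ref{orth_cp} produces: since any quadrant not containing $p$ lies entirely to the right of $p$ or entirely above $p$, at most $n/2$ points of $P$ lie strictly to the right of $p$ and at most $n/2$ lie strictly above it. Drawing the vertical and horizontal lines $v,h$ through $p$ to split $P\setminus\{p\}$ into the four regions of figure~\ref{rectangle}, I would observe that the quadrant induced by a pair $u,w$ is anchored at $\bigl(\min(u_x,w_x),\min(u_y,w_y)\bigr)$, hence it fails to contain $p$ exactly when $u$ and $w$ both lie strictly to the right of $p$, or both lie strictly above $p$ (in particular no pair involving $p$ is of this kind). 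Since there are at most $n/2$ points strictly right of $p$ and at most $n/2$ strictly above, the induced quadrants (counted with multiplicity) missing $p$ number at most $\binom{n/2}{2}+\binom{n/2}{2}$, so
\[
 f_p^{\mathcal{O}}\ \ge\ \binom{n}{2}-2\binom{n/2}{2}\ =\ \frac{n^2}{4},
\]
giving $s^{\mathcal{O}}(n)\ge n^2/4$.

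For the upper bound I would take $P=\{p_1,\dots,p_n\}$ in general position and in strictly monotonically decreasing position, so $(p_1)_x<\dots<(p_n)_x$ and $(p_1)_y>\dots>(p_n)_y$. Then for $i<j$ the induced quadrant equals $\{x\ge(p_i)_x,\ y\ge(p_j)_y\}$ and contains precisely the points $p_i,p_{i+1},\dots,p_j$, so it contains a fixed $p_k$ exactly when $i\le k\le j$. Counting the pairs $\{i,j\}$ with $i\le k\le j$ gives
\[
 f_{p_k}^{\mathcal{O}}\ =\ \binom{n}{2}-\binom{k-1}{2}-\binom{n-k}{2},
\]
and maximizing the right-hand side over $k$ (the maximum being near $k=n/2$, with value $n^2/4$) shows that no point of $P$ lies in more than $n^2/4$ induced quadrants, i.e.\ $s^{\mathcal{O}}(n)\le n^2/4$.

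The two displayed identities are routine double counting. The part I expect to be most delicate is the description of the induced quadrants missing $p$ in the lower bound: one must check that the multiset structure of $\mathcal{O}$ and the pairs containing $p$ are handled correctly, the estimate $\binom{n/2}{2}+\binom{n/2}{2}$ being a harmless over-count (pairs lying entirely in the north-east region are counted twice). A secondary point is pinning down the exact constant in the upper bound, since the decreasing-chain value $\binom{n}{2}-\binom{k-1}{2}-\binom{n-k}{2}$ attains $n^2/4$ only up to an additive $O(n)$, so -- as elsewhere in the paper -- the bound is understood at the level of its leading term.
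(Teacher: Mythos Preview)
Your proof is correct, and the upper-bound construction is the same monotone chain the paper uses. The lower bound, however, is argued differently. The paper writes down an explicit formula
\[
f_p^{\mathcal O}=\tbinom{|D|}{2}+|D|\bigl(|A|+|B|+|C|\bigr)+|A|\cdot|C|
\]
for the number of induced quadrants through $p$ in terms of the four-region decomposition, and then minimises it subject to the centerpoint constraints $|A|+|B|\le n/2$ and $|B|+|C|\le n/2$. You instead count the complement: a quadrant misses $p$ exactly when both inducing points lie strictly right of $p$ or both lie strictly above $p$, and each of those half-planes holds at most $n/2$ points by Lemma~\ref{orth_cp}, so at most $2\binom{n/2}{2}$ induced quadrants miss $p$. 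Your route is shorter and avoids the optimisation step entirely, since the centerpoint bounds feed directly into the count of missing quadrants; the paper's route has the advantage of making the contribution of each region explicit, which fits the pattern of the neighbouring proofs. Your careful treatment of pairs involving $p$ itself and of the $O(n)$ slack in the upper-bound maximum is also more precise than the paper's, which (as you note) works at the level of the leading term.
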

\begin{proof}
Let $p\in P$ be a point as described in lemma \ref{orth_cp} i.e, $p$ is contained in all quadrants that contain more than $\frac{n}{2}$ points from $P$. We claim that $p$ is contained in at least $\frac{n^2}{4}$ induced quadrants.

Let $p$ divide the plane into four quadrants as shown in figure \ref{rectangle}.
We know that,
\begin{center}
 $\vert A \vert + \vert B \vert \leq \frac{n}{2}$
\\$\vert B \vert + \vert C \vert \leq \frac{n}{2}$
\end{center}
Assume $\vert D \vert=x$. Therefore,
$\vert A \vert,\vert C\vert \geq \frac{n}{2}-x$.
\begin{eqnarray*}
f^\mathcal{O}_p &=& \frac{{\vert D \vert}^2}{2}+\vert D \vert\left(\vert A \vert+\vert B\vert +\vert C\vert\right)+\vert A \vert .\vert C \vert \\
&\geq&\frac{x^2}{2}+ x(n-x)+\left(\frac{n}{2}-x\right)^2\\
&\geq& \frac{n^2}{4} 
\end{eqnarray*}
Therefore $p$ is contained in at least $\frac{n^2}{4}$ induced quadrants.

To prove the upper bound, consider $P$ as $n$ points arranged in a monotonically decreasing order. Let $p$ be any point in $P$. Then $p$ is contained in all quadrants induced by two points $q,r \in P$ where $q$ lies above $p$ and $r$ lies below $p$. Let $p$ be $x$ points away from the topmost point in $P$.
Therefore,
$f^\mathcal{O}_p = x(n-x)$\\
The value of $f^\mathcal{O}_p$ is maximized when $x=\frac{n}{2}$. Therefore $s^\mathcal{O}(n) \leq \frac{n^2}{4}$.
\end{proof}

\subsection{Axis-Parallel Slabs}

\begin{figure}[h]
\begin{center}
\includegraphics[scale=0.5]{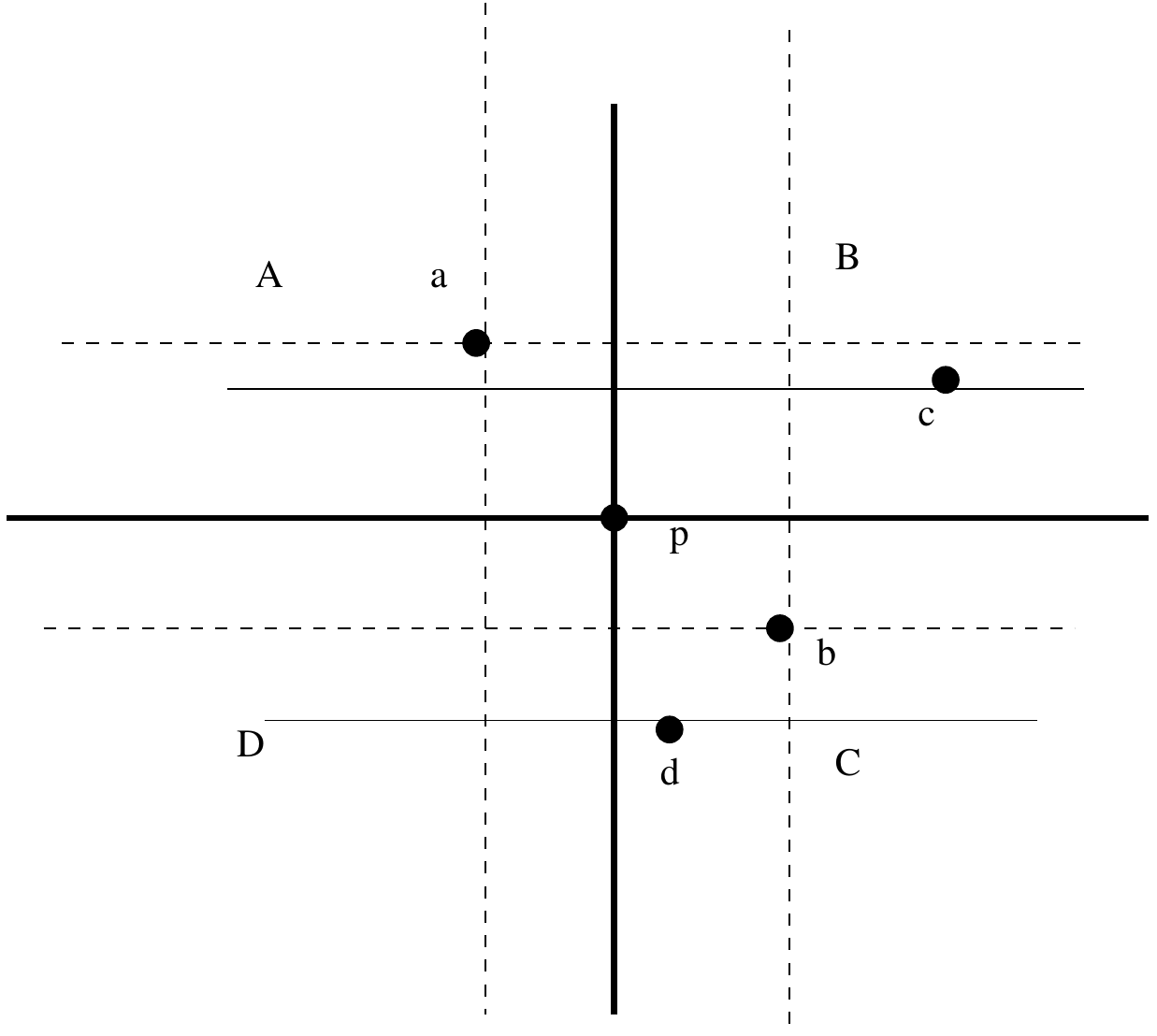}
\caption{Both slabs defined by a and b contain p, whereas only the horizontal slab defined by c and d contains p}
\label{slabfig}
\end{center}
\end{figure}

Axis-parallel slabs are a special class of axis-parallel rectangles where two horizontal or two vertical sides are unbounded. 
Each pair of points $p(x_1,y_1)$ and $q(x_2,y_2)$ induces two axis-parallel slabs of the form $[x_1,x_2]\times (-\infty,+\infty)$ 
and $(-\infty,+\infty)\times [y_1,y_2]$. Let $\mathcal{S}$ represent the family of $2 {n\choose 2}$ axis-parallel slabs induced by $P$. 

We first look at the weak variant for axis-parallel slabs. Let $x_{med}$ be the median of $P$ when the points are projected onto the $x$ axis. Similarly, let $y_{med}$ be the 
median of $P$ when the points are projected onto the $y$ axis. We claim that $(x_{med}, y_{med})$ is present in 
$\frac{n^2}{2}$ induced slabs. Indeed, $x_{med}$ is present in at least $\frac{n^2}{4}$ intervals, obtained by projecting the 
vertical slabs onto the $x$ axis. Similarly, $y_{med}$ is present in at least $\frac{n^2}{4}$ intervals, obtained by projecting 
the horizontal slabs onto the $y$ axis. Since the set of horizontal and vertical slabs are disjoint, $(x_{med}, y_{med})$ is 
present in at least $\frac{n^2}{4} + \frac{n^2}{4} = \frac{n^2}{2}$ induced slabs. It can be easily seen that this bound is 
tight.


Now we consider the strong variant. Let $p \in P$ be any arbitrary point. Note that for a pair of points $x,y \in P$, $p$ is contained in both the 
vertical and horizontal axis-parallel slabs induced by them if $x$ and $y$ are present in diagonally opposite quadrants w.r.t $p$ and in 
exactly one of the induced axis-parallel slabs if $x$ and $y$ are present in adjacent quadrants w.r.t $p$ (see figure \ref{slabfig}). Therefore,
\begin{center}
$f^\mathcal{S}_p= 2\left(\vert A \vert .\vert C\vert+\vert B\vert.\vert D \vert\right) + \left(\vert A\vert + \vert C\vert\right)\left(\vert B \vert +\vert D\vert\right)$.
\end{center}
\begin{theorem}
$ s^\mathcal{S}(n) = \frac{3n^2}{8}$
\end{theorem}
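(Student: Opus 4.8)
The plan is to establish matching lower and upper bounds for $s^\mathcal{S}(n)$, using the formula
$f^\mathcal{S}_p = 2(|A||C| + |B||D|) + (|A|+|C|)(|B|+|D|)$
already derived above, together with a suitable choice of strong piercing point $p \in P$.

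For the lower bound, I would first pin down which point of $P$ to use. A natural candidate is a point $p$ analogous to the strong centerpoint for rectangles: since horizontal slabs are (degenerate) rectangles unbounded in $x$ and vertical slabs are rectangles unbounded in $y$, the relevant guarantee is that $p$ should lie in any vertical slab containing more than $n/2$ points (equivalently, $|A|+|B| \le n/2$ and $|C|+|D| \le n/2$ — i.e. $p$ is a ``$y$-median-type'' point) and in any horizontal slab containing more than $n/2$ points ($|A|+|D| \le n/2$ and $|B|+|C| \le n/2$). Such a point — essentially a strong centerpoint for axis-parallel boxes in $\reals^2$, which exists by \cite{AGK10} — simultaneously forces all four pairwise sums $|A|+|B|$, $|B|+|C|$, $|C|+|D|$, $|D|+|A|$ to be at most $n/2$. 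Writing $|A|=a$, $|B|=b$, $|C|=c$, $|D|=d$ with $a+b+c+d=n$ and these four constraints, I would then minimize $f^\mathcal{S}_p = 2(ac+bd) + (a+c)(b+d)$ over this polytope. I expect the minimum to occur at the symmetric point $a=b=c=d=n/4$, giving $2(n^2/16 + n^2/16) + (n/2)(n/2) = n^2/8 + n^2/4 = 3n^2/8$; a short Lagrange/vertex-enumeration argument (or a convexity observation, noting $(a+c)(b+d) = (a+c)(n-(a+c))$ is concave in $a+c$ and pushed to its extremes, while $ac+bd$ is minimized at extremes of $a,c$ and of $b,d$) should confirm that the constraints prevent $f^\mathcal{S}_p$ from dropping below $3n^2/8$.

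For the upper bound I would exhibit a point set where no point of $P$ beats $3n^2/8$. The uniform-on-a-circle construction used for rectangles and quadrants is the natural choice: take $n$ points equally spaced on a circle. For a point $p\in P$ that is $k$ steps from the top and (by the circular symmetry) correspondingly placed relative to the left/right and bottom extremes, the horizontal and vertical lines through $p$ cut $P$ into quadrants of sizes determined by $k$; plugging these sizes into the formula for $f^\mathcal{S}_p$ yields a quadratic in $k$ (and in the analogous parameter for the horizontal cut) whose maximum I expect to be exactly $3n^2/8$, attained at the balanced position. Carrying out this evaluation and checking the maximizing value matches the lower bound completes the proof.

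The main obstacle is the lower-bound optimization: unlike the pure rectangle case where $f^\mathcal{R}_p = |A||C| + |B||D|$ is a clean product to minimize, here the extra cross term $(a+c)(b+d)$ couples the two ``diagonals'' and the objective is neither convex nor concave on the constraint polytope, so I must be careful that the minimum is genuinely interior-symmetric rather than on a lower-dimensional face. I would handle this by reducing variables using $s := a+c$ (so $b+d = n-s$) and noting $ac + bd \le$ is bounded below in terms of how unbalanced $a$ vs $c$ and $b$ vs $d$ can be, subject to $a+b\le n/2$, $b+c \le n/2$, etc.; the four ``adjacent-sum $\le n/2$'' constraints are exactly what rules out the degenerate minimizers, and it is the correct bookkeeping of these four constraints that is the delicate part. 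Establishing the existence of the required strong piercing point $p$ is comparatively routine given \cite{AGK10}, or can be argued directly by a median-of-medians argument as in Lemma \ref{orth_cp}.
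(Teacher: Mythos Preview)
Your lower-bound argument has a genuine gap: the point $p\in P$ you want does not exist in general. Requiring $|A|+|B|\le n/2$, $|C|+|D|\le n/2$, $|A|+|D|\le n/2$, $|B|+|C|\le n/2$ simultaneously forces all four inequalities to be equalities (since the two pairs each sum to $n$), which means $p$ must be \emph{both} the $x$-median and the $y$-median of $P$. For a generic point set no single input point has this property (e.g.\ $P=\{(1,3),(2,1),(3,2)\}$). The reference \cite{AGK10} does not give you this: the strong centerpoint for axis-parallel rectangles only guarantees the constant $3/4$, i.e.\ each adjacent-quadrant sum is between $n/4$ and $3n/4$, not $\le n/2$. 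The analogy with Lemma~\ref{orth_cp} also fails, since that lemma handles one-sided quadrants, not two-sided slabs.

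There is also an arithmetic slip that masks the problem: at $a=b=c=d=n/4$ one gets $2(ac+bd)+(a+c)(b+d)=2\cdot\tfrac{n^2}{8}+\tfrac{n^2}{4}=\tfrac{n^2}{2}$, not $3n^2/8$. Indeed, with your (over-strong) constraints the value is \emph{always} $n^2/2$, since they force $a=c$, $b=d$, $a+b=n/2$ and hence $f_p^{\mathcal S}=2(a+b)^2$. The paper's proof instead works with the correct $3n/4$ constraint from \cite{AGK10}; over that larger polytope the minimum of $f_p^{\mathcal S}$ is $3n^2/8$, attained not at the symmetric center but at a vertex such as $(|A|,|B|,|C|,|D|)=(3n/4,0,n/4,0)$, and the paper verifies this by a two-case analysis on whether $|C|\le n/4$ or $|C|>n/4$. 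Your upper-bound plan via the circle configuration is correct and matches the paper, with the maximum $3n^2/8$ occurring at $k=n/8$.
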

\begin{proof}

Let $p \in P$ be the strong centerpoint for axis-parallel rectangles~\cite{AGK10}. Note that this is also a strong centerpoint for 
axis-parallel slabs i.e., any axis-parallel slab that contains more than $\frac{3n}{4}$ points from $P$ contains $p$. We claim 
that $p$ is contained in at least $\frac{3n^2}{8}$ induced axis-parallel slabs. 

Let $p$ divide the plane into four quadrants as shown in figure \ref{rectangle}. If $\vert A \vert = \frac{3n}{4}$ then $\vert C \vert = \frac{n}{4}$ and $f^\mathcal{S}_p \geq \frac{3n^2}{8}$. Therefore, 
assume that $\vert A \vert = \frac{3n}{4}-x$. Assume that $x \leq \frac{n}{2}$ (There exists at least one quadrant such that this is true). Now there are two cases:
\begin{enumerate}
\item $ \vert C \vert = \frac{n}{4}-y$: 

Since $p$ is a strong centerpoint, adjacent quadrants have at least $\frac{n}{4}$ points. Therefore quadrants $B$ and $D$ should contain at least $y$ points of $P$. Also, adjacent quadrants have at most $\frac{3n}{4}$ points. Therefore quadrants $B$ and $D$ have at most $x$ points of $P$. This implies $x \geq y$.
\begin{eqnarray*}
f^\mathcal{S}_p &=&2\left(\vert B \vert\cdot \vert D \vert+\left(\frac{3n}{4}-x\right)\left(\frac{n}{4}-y\right)\right)+\left(x+y\right)\left(n-\left(x+y\right)\right)\\
\end{eqnarray*}

 $f^\mathcal{S}_p$ is minimized when $\vert B \vert. \vert D \vert$ is minimized i.e., the points are distributed as unevenly as possible between $B$ and $D$. Therefore, $f^\mathcal{S}_p$ is minimized when  $\vert B \vert =x$ and $\vert D \vert =y$. 

\begin{eqnarray*}
 f^\mathcal{S}_p &=&2\left(xy+\left(\frac{3n}{4}-x\right)\left(\frac{n}{4}-y\right)\right)+\left(x+y\right)\left(n-\left(x+y\right)\right)\\
&=&\frac{3n^2}{8}+2xy+\frac{nx}{2}-\frac{ny}{2}-x^2-y^2\\
&=&\frac{3n^2}{8}+\left (\frac{n}{2}\left(x-y\right)-\left(x-y\right)^2\right) \geq \frac{3n^2}{8}
\end{eqnarray*}
\item
$\vert C \vert = \frac{n}{4}+y$:

In this case, 
\begin{eqnarray*}
 f^\mathcal{S}_p &=&2\left(\vert B \vert\cdot \vert D \vert+\left(\frac{3n}{4}-x\right)\left(\frac{n}{4}+y\right)\right)+\left(x-y\right)\left(n-\left(x-y\right)\right)\\
\end{eqnarray*}
By reasons similar to case 1, $0\leq \vert B \vert,\vert D \vert \leq x$. The value of $f^\mathcal{S}_p$ is minimized when $ B $ or $ D $ is empty. Therefore,

\begin{eqnarray*}
 f^\mathcal{S}_p &=& 2\left(\frac{3n}{4}-x\right)\left(\frac{n}{4}+y\right)+\left(x-y\right)\left(n-\left(x-y\right)\right)\\
&=& \frac{3n^2}{8}+x\left(\frac{n}{2}-x\right)+y\left(\frac{n}{2}-y\right) \geq \frac{3n^2}{8}
\end{eqnarray*}

\end{enumerate}

To prove the upper bound, consider $P$ as $n$ points arranged along the boundary of a circle. Let $p \in P$. W.l.o.g assume 
that $p$ is $k$ points away from the topmost point and $k \leq \frac{n}{4}$. $h_p$ and $v_p$ divides the plane into four 
regions containing $2k,\frac{n}{2},\frac{n}{2}-2k,0$ points from $P$. 
Therefore,
\begin{eqnarray*}
f^\mathcal{S}_p &=& 2. 2k\left(\frac{n}{2}-2k\right) +\frac{n}{2}.\frac{n}{2}\\
&=& 2nk-8k^2+\frac{n^2}{4}
\end{eqnarray*}
The value of $f^\mathcal{S}_p$ is maximized when
\begin{center}
$2n-16k=0$\\
i.e., $k=\frac{n}{8}$\\
\end{center}
Therefore,
\begin{center}
$f^\mathcal{S}_p \leq \frac{3n^2}{8}$
\end{center}
\end{proof}

\subsection{Skylines}
Skylines are axis-parallel rectangles that are unbounded along a fixed axis, say negative $Y$ axis. A skyline induced by 
two points has the point with the higher $y$-coordinate at one corner and the other point in the opposite vertical edge 
(see figure \ref{sky_def}). Let $\mathcal{K}$ represent the family of all $n \choose 2$ skylines induced by $P$.

\begin{figure}[h]
\begin{center}
\includegraphics[scale=0.5]{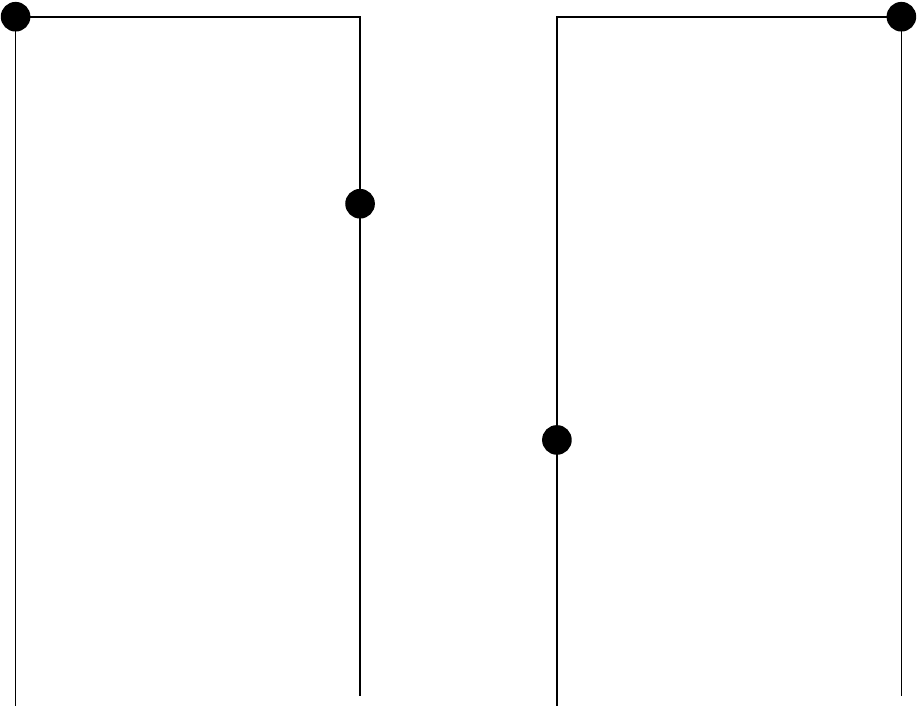}
\caption{Skyline induced by two points}
\label{sky_def}
\end{center}

\end{figure}

As in the case of induced orthants and slabs, the weak first selection lemma for skylines is straightforward. Let
$x_{med}$ be the median of $P$ projected onto the $x$ axis. Since the skylines can be assumed to be anchored on the $x$ 
axis, $x_{med}$ is present in at least $\frac{n^2}{4}$ intervals skylines. This is because $x_{med}$ is present in 
$\frac{n^2}{4}$ intervals obtained by projecting $\mathcal{K}$ on the $x$ axis. It is easy to see that this bound is tight.

For the strong variant of the first selection lemma, we prove almost tight bounds.

\begin{lemma}\label{cp_sky}
 For any set $P$ of $n$ points, there exists $p \in P$ such that any skyline that contains more than $\frac{2n}{3}$ points from $P$ contains $p$.
\end{lemma}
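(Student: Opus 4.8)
The plan is to work with the contrapositive: I will exhibit a point $p \in P$ with the property that every skyline that misses $p$ contains at most $\frac{2n}{3}$ points of $P$. The first step is to record exactly when a skyline $S = [\ell,r]\times(-\infty,t]$ fails to contain $p = (x_p,y_p)$: since $S$ is closed, this happens precisely when $x_p < \ell$, or $x_p > r$, or $y_p > t$. In the first case every point of $S$ lies strictly to the right of $p$; in the second, strictly to its left; in the third, strictly below $p$. Hence it suffices to choose $p$ so that at most $\frac{2n}{3}$ points of $P$ lie strictly to its left, at most $\frac{2n}{3}$ strictly to its right, and at most $\frac{2n}{3}$ strictly below it.

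To find such a $p$, I would order $P$ by $x$-coordinate (all coordinates are distinct by general position) and let $M$ be the block of the $\frac{n}{3}$ points of middle $x$-rank. Every $p \in M$ automatically has at most $\frac{2n}{3}-1$ points of $P$ strictly to its left and at most $\frac{2n}{3}-1$ strictly to its right, so the only remaining requirement is the one bounding the points below $p$. For that I would note that a point with strictly more than $\frac{2n}{3}$ points of $P$ below it must be among the $\frac{n}{3}-1$ topmost points in $y$-order, so there are fewer than $\frac{n}{3} = |M|$ such points; by pigeonhole $M$ contains a point $p$ with at most $\frac{2n}{3}$ points of $P$ below it. This $p$ meets all three requirements, so by the first step every skyline containing more than $\frac{2n}{3}$ points of $P$ must contain $p$.

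The main point to watch is that the constant $\frac{2n}{3}$ is exactly what the pigeonhole can afford and no smaller fraction would work: a priori the middle $x$-third of $P$ could almost coincide with its top $y$-third, and the argument survives only because $|M| = \frac{n}{3}$ strictly exceeds $\frac{n}{3}-1$, the number of points of $P$ having more than $\frac{2n}{3}$ points below them. The rest is routine bookkeeping — replacing $\frac{n}{3}$ and $\frac{2n}{3}$ by the appropriate floors and ceilings when $3 \nmid n$, and being careful with the closedness of skylines in the case analysis of the first step.
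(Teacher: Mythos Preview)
Your argument is correct and follows essentially the same approach as the paper: both locate $p$ in the intersection of the middle $x$-third of $P$ with the bottom $y$-two-thirds, and both use a counting argument to show this intersection is non-empty. The paper phrases this geometrically---drawing two vertical lines and one horizontal line to cut the plane into six cells and showing the bottom-middle cell $E$ is non-empty---while you phrase it via the contrapositive and a pigeonhole on the middle $x$-block $M$; the underlying idea is identical.
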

\begin{proof}
 Let $v_1$(resp. $v_2$) be a vertical line that has $\frac{n}{3}-1$ points of $P$ to the left(resp. right) of it. Let $h$ be a horizontal line that has $\frac{n}{3}$ points of $P$ above it. Thus we get a grid-like structure as shown in figure \ref{sky_ub}.
 

The region $E$ cannot be empty since $\vert B \vert+\vert E\vert = \frac{n}{3}+2$ and $\vert B \vert \leq \frac{n}{3}$. Let $p$ be any point in the region $E$. We claim that $p$ is contained in all skylines that contain more than $\frac{2n}{3}+1$ points from $P$. 

Any skyline $S$ that contains more than $\frac{2n}{3}+1$ points from $P$ takes points from all three vertical slabs and from both horizontal slabs. Therefore  $S$ contains the entire region $ E $ and therefore the point $p$.

%
%
%

\begin{figure}[h]

 \begin{center}
 \scalebox{0.5}{ \input{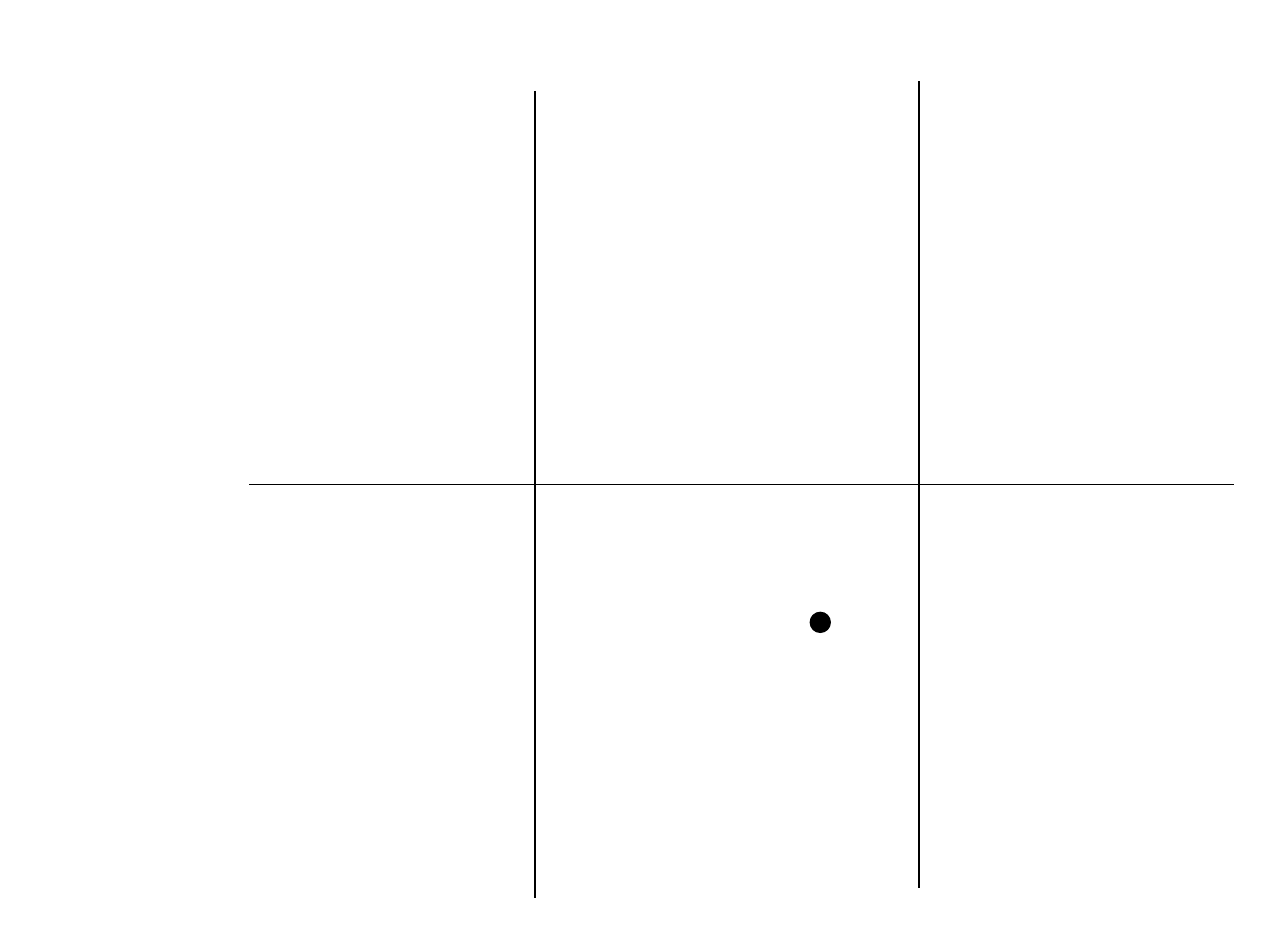_t}}
 \end{center}
\caption{To prove lemma \ref{cp_sky}}
\label{sky_ub}

\end{figure}
\end{proof}

\begin{theorem}
$\frac{n^2}{9} \leq s^\mathcal{K}(n) \leq \frac{n^2}{8}$
\end{theorem}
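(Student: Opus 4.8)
\textbf{Proof plan for the skyline bounds $\frac{n^2}{9} \le s^\mathcal{K}(n) \le \frac{n^2}{8}$.}

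The plan is to establish the lower bound via the point $p \in P$ produced by Lemma~\ref{cp_sky}, and the upper bound via the uniform-on-a-circle construction, mirroring the pattern used for rectangles and slabs. For the lower bound, I would fix the point $p$ guaranteed by Lemma~\ref{cp_sky} and draw the horizontal and vertical lines $h_p, v_p$ through $p$, obtaining quadrants $A$ (top-left), $B$ (top-right), $C$ (bottom-right), $D$ (bottom-left) with $|A|+|B|+|C|+|D| = n$. The first step is to write $f^\mathcal{K}_p$ as a counting expression: a skyline induced by $\{q,r\}$ contains $p$ exactly when the higher point and the lower point straddle $h_p$ and the horizontal extent straddles $v_p$; counting cases where the upper point is in $A$ (lower point in $C$ or $D$) and where the upper point is in $B$ (lower point in $C$ or $D$) gives $f^\mathcal{K}_p = |A|\,|C| + |A|\,|D| + |B|\,|C| + |B|\,|D| = (|A|+|B|)(|C|+|D|)$, i.e. it depends only on how many points lie above versus below $h_p$. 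Wait — I must double-check which pairs in adjacent quadrants actually induce a skyline through $p$: a skyline spanned by a point in $A$ and a point in $D$ is anchored so that its left edge lies left of $v_p$ and its right edge is the $x$-coordinate of whichever point has larger $x$; whether $p$ lies inside depends on that. So the honest expression is $f^\mathcal{K}_p = (\text{points above } h_p)\cdot(\text{points below } h_p)$ only after accounting for the horizontal straddle, and the clean bound will come from Lemma~\ref{cp_sky}: since $p \in E$ in that construction, the number of points of $P$ above $h$ is at most $\tfrac{n}{3}$ and the number below is at least $\tfrac{2n}{3}$, but also $p$ sits in the middle vertical slab, so the quantities on the two sides of $h_p$ (through $p$, not through the auxiliary $h$) are controlled. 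The cleanest route is: by Lemma~\ref{cp_sky}, at least $\tfrac{n}{3}$ points lie strictly above $h_p$ (otherwise a skyline covering the top two-thirds minus... ) — more carefully, the defining property "every skyline with $>\tfrac{2n}{3}$ points contains $p$" forces at least $\tfrac{n}{3}$ points on each of the two sides of $h_p$ and into each of the three vertical slabs, so $f^\mathcal{K}_p \ge \tfrac{n}{3}\cdot\tfrac{2n}{3}\cdot(\text{straddle fraction})$; pushing the case analysis the way Theorem~\ref{strong_rect} does, the minimum is $\tfrac{n^2}{9}$, attained at the balanced split $\tfrac{n}{3},\tfrac{n}{3},\tfrac{n}{3}$ of the vertical slabs with the horizontal split at $\tfrac{n}{3}$ above.

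For the upper bound, I would take $P$ to be $n$ points spaced uniformly on a circle, let $p \in P$ be $k$ points clockwise from the topmost point, and draw $h_p, v_p$ through $p$. By the symmetry of the uniform-circle configuration the four quadrants at $p$ contain $2k$, $\tfrac{n}{2}$, $\tfrac{n}{2}-2k$, $0$ points (as in the slab upper-bound argument), so the points above $h_p$ number $\tfrac{n}{2}+2k$ wait — rather $2k + 0$ above is wrong; I will recompute: with $p$ near the top, the two quadrants meeting at $p$ from above together hold a small number of points and the two below hold close to all of $P$. The count of skylines through $p$ is (upper points straddling correctly)$\times$(lower points), which with the circle's symmetry evaluates to an expression of the form $(\tfrac{n}{2}-2k)\cdot 2k + (\text{adjacent-quadrant contributions})$; optimizing the resulting quadratic in $k$ gives the maximum $\tfrac{n^2}{8}$ at $k = \tfrac{n}{8}$, exactly as in the slab and rectangle computations. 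Since this holds for every $p \in P$, $s^\mathcal{K}(n) \le \tfrac{n^2}{8}$.

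The main obstacle is bookkeeping the contribution of the \emph{adjacent} quadrants to $f^\mathcal{K}_p$: unlike the straddle-both-coordinates case of rectangles, a skyline is asymmetric in $x$ and $y$, so a pair with one point in quadrant $A$ and one in quadrant $D$ (or $B$ and $C$) may or may not induce a skyline containing $p$ depending on which point has the larger $x$-coordinate, and this interacts with the exact combinatorial type of the induced skyline (figure~\ref{sky_def}). I would handle this by carefully splitting each adjacent-quadrant pair count according to the $x$-order of the two points, reducing $f^\mathcal{K}_p$ to a sum of products of quadrant sizes (and half-products within a quadrant, as the orthant proof did), and only then running the same min/max quadratic optimization. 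I expect the lower-bound optimization to reproduce the $1/9$ constant precisely because Lemma~\ref{cp_sky} gives a $\tfrac{2n}{3}$ guarantee, i.e. a ``$3$-part'' centerpoint, in contrast to the ``$4$-part'' centerpoint behind the $\tfrac{n^2}{16}$ rectangle bound — and the gap between $\tfrac19$ and $\tfrac18$ is exactly the slack that prevents this from being tight.
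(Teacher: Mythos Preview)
There are two concrete gaps.

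\textbf{Upper bound: the full circle fails.} Skylines are unbounded below, so the configuration is not invariant under the top/bottom symmetry you are implicitly using. If $P$ is $n$ points uniform on a full circle and $p$ is the bottommost point, then every other point lies above $h_p$, the vertical line $v_p$ is the vertical diameter and splits $P$ into two halves of size $\approx n/2$, and \emph{every} pair straddling $v_p$ induces a skyline containing $p$ (since at least one point---in fact both---is above $h_p$). Hence $f_p^\mathcal{K} \approx (n/2)^2 = n^2/4$, well above $n^2/8$; your ``optimizing over $k$'' only scans points near the top of the circle and misses this. The paper instead places $P$ on a \emph{semicircle} (arc above a horizontal diameter), so that the bottommost points are the two endpoints, which are in no skylines at all. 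For $p$ at distance $k$ from the apex one gets $|A|=0$, $|B|=2k$, $|C|=|D|=n/2-k$, and $f_p^\mathcal{K} = 2k(n/2-k)$, maximized at $k=n/4$ with value $n^2/8$.

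\textbf{Lower bound: wrong containment condition and hence wrong formula.} A skyline induced by $\{q,r\}$ is the region $[\min(q_x,r_x),\max(q_x,r_x)]\times(-\infty,\max(q_y,r_y)]$. It contains $p$ iff $q,r$ lie on opposite sides of $v_p$ \emph{and at least one of $q,r$ lies above $h_p$}---not iff they straddle both $h_p$ and $v_p$. In the quadrant notation ($A$ top-left, $B$ top-right, $C$ bottom-right, $D$ bottom-left) this yields
\[
f_p^\mathcal{K} \;=\; |A|\,|C| \;+\; |B|\,|D| \;+\; |A|\,|B|,
\]
not $(|A|+|B|)(|C|+|D|)$: your expression wrongly counts the pairs in $A\times D$ and $B\times C$ (same side of $v_p$) and omits the pairs in $A\times B$ (both above $h_p$, opposite sides of $v_p$). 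The asymmetric $|A|\,|B|$ term is exactly what makes the minimization more than a one-line product bound. The paper proceeds by a four-case analysis on whether $|A|$ and $|C|$ are above or below $n/3$, using the $2n/3$-centerpoint property of Lemma~\ref{cp_sky} to pin down two-sided bounds on $|B|$ and $|D|$ in each case before minimizing. Your sketch never arrives at the correct formula, and without it the ``min/max quadratic optimization'' you describe cannot be set up.
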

Let $p \in P$ be a point as described in lemma \ref{cp_sky} i.e., any skyline that contains more than $\frac{2n}{3}$ points 
from $P$ contains $p$. We claim that $p$ is contained in at least $\frac{n^2}{9}$ induced skylines. 

Let $p$ divide the plane into four quadrants as shown in figure \ref{rectangle}. Therefore,

\begin{center}
$f_p^{\mathcal{K}}=\vert A \vert \vert C \vert + \vert B \vert \vert D \vert + \vert A \vert \vert B \vert$
\end{center}

If both $\vert A \vert$ and $\vert C \vert$ are $\geq \frac{n}{3}$ then the claim is true. Therefore assume this is not true. 
Now there are four cases. In all the cases, we fix the number of points in $A$ and $C$. Note that the value of $f^\mathcal{K}_p$ 
is minimized when $B$ has very few points than $D$.
\begin{enumerate}
\item $\vert A \vert=\frac{n}{3}-x, \vert C \vert=\frac{n}{3}-y, x \leq y$ : 
\begin{center}
$f^\mathcal{K}_p=\left(\frac{n}{3}-x\right)\left(\frac{n}{3}-y\right)+ \vert B\vert \vert D\vert+\left(\frac{n}{3}-x\right)\vert B\vert$
\end{center}

Since $\frac{n}{3} \leq \vert A \vert + \vert D \vert \leq \frac{2n}{3}$ and $\vert C \vert + \vert D \vert \leq \frac{2n}{3}$,
\begin{equation*}
y \leq \vert D \vert \leq \frac{n}{3}+x
\end{equation*}
Also,$ \vert A \vert+\vert B \vert \geq \frac{n}{3}$ and $\frac{n}{3} \leq \vert B \vert + \vert C \vert \leq \frac{2n}{3}$. 
Therefore,
\begin{equation*}
y \leq \vert B \vert \leq \frac{n}{3}+y
\end{equation*}
The value of $f^\mathcal{K}_p$ is minimized when $\vert B \vert =y$ and $\vert D \vert =\frac{n}{3}+x$. Therefore,

\begin{eqnarray*}
f^\mathcal{K}_p&=&\left(\frac{n}{3}-x\right)\left(\frac{n}{3}-y\right)+ y\left(\frac{n}{3}+x\right)+\left(\frac{n}{3}-x\right)y\\
&=&\frac{n^2}{9}+xy + \frac{n}{3}\left(y-x\right)\\
& \geq & \frac{n^2}{9}
\end{eqnarray*}
\item $\vert A \vert=\frac{n}{3}-x, \vert C \vert=\frac{n}{3}-y, x \geq y$ : 

Since $\frac{n}{3} \leq \vert A \vert + \vert D \vert \leq \frac{2n}{3}$ and $\vert C \vert + \vert D \vert \leq \frac{2n}{3}$,
\begin{equation*}
x \leq \vert D \vert \leq \frac{n}{3}+y
\end{equation*}
Also,$ \vert A \vert+\vert B \vert \geq \frac{n}{3}$ and $\frac{n}{3} \leq \vert B \vert + \vert C \vert \leq \frac{2n}{3}$. 
Therefore,
\begin{equation*}
x \leq \vert B \vert \leq \frac{n}{3}+y
\end{equation*}
The value of $f^\mathcal{K}_p$ is minimized when $\vert B \vert =x$ and $\vert D \vert =\frac{n}{3}+y$.
\begin{eqnarray*}
f^\mathcal{K}_p&=&\left(\frac{n}{3}-x\right)\left(\frac{n}{3}-y\right)+x\left(\frac{n}{3}+y\right)+x\left(\frac{n}{3}-x\right)\\
&=&\frac{n^2}{9}+\left(x-y\right)\left(\frac{n}{3}+y-x\right)+y^2\\
& \geq & \frac{n^2}{9}
\end{eqnarray*}

\item  $\vert A \vert=\frac{n}{3}-x, \vert C \vert=\frac{n}{3}+y$ : 

By reasons similar to case 2,

\begin{eqnarray*}
x \leq \vert B \vert \leq \frac{n}{3}-y\\
x \leq \vert D \vert \leq \frac{n}{3}-y\\
\end{eqnarray*}
Therefore, the value of $f^\mathcal{K}_p$ is minimized when $\vert B \vert=x$ and $\vert D \vert=\frac{n}{3}-y$. Since 
$x \leq \frac{n}{3}$, this case now becomes exactly like one of the previous cases where two diagonally opposite quadrants have 
less than $\frac{n}{3}$ points.

\item  $\vert A \vert=\frac{n}{3}+x, \vert C \vert=\frac{n}{3}-y$ : 

Here
$\vert B \vert +\vert D \vert=\frac{n}{3}+y-x$. Also,
\begin{eqnarray*}
y \leq \vert B \vert \leq \frac{n}{3}-y\\
\vert D \vert \leq \frac{n}{3}-x
\end{eqnarray*}
Therefore, the value of $f^\mathcal{K}_p$ is minimized when $\vert B \vert =y$ and $\vert D \vert =\frac{n}{3}-x$. Since 
$y \leq \frac{n}{3}$, this becomes exactly like case 1 or 2.
\end{enumerate}
Therefore,
\begin{center}
 $ s^\mathcal{K}(n) \geq \frac{n^2}{9}$
\end{center}

\begin{figure}[t]

\begin{centering}
\scalebox{0.5}{\input{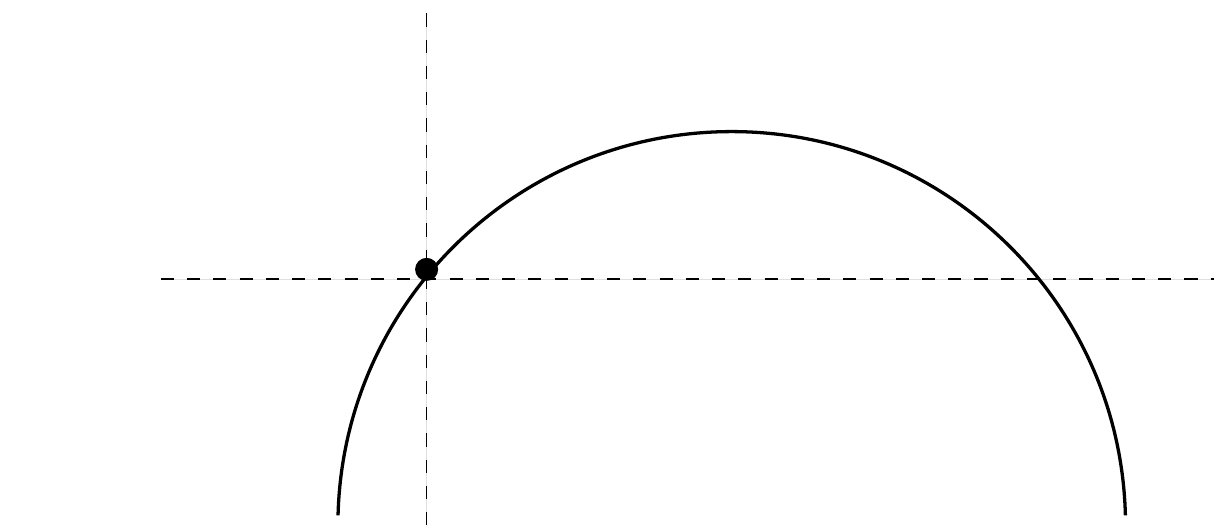_t}}
\caption {Upper bound construction for skylines}
\label{sky}
\end{centering}
\end{figure}

To show the upper bound we consider $P$ as shown in figure \ref{sky}. $n$ points are arranged along the boundary of a 
semicircle. Let $p$ be any point in $P$. We claim that $p$ is contained in at most $\frac{n^2}{8}$ induced skylines. 

Assume that $p$ is the $k$th point from the topmost point. Therefore,
\begin{eqnarray*}
f^\mathcal{K}_p &=& 2k\left(\frac{n}{2}-k\right)\\
&=& nk-2k^2
\end{eqnarray*}
The value of $f^\mathcal{K}_p$ is maximized when $k=\frac{n}{4}$ and $f^\mathcal{K}_p \leq \frac{n^2}{8}$.

\section{Boxes in $\mathbb{R}^d$}
\label{boxes}
\newcommand{\rpm}{\raisebox{.2ex}{$\scriptstyle\pm$}}

Let $P$ be a set of $n$ points in $\reals^d$ and $\mathcal{B}$ be the set of all $n\choose2$ boxes induced by $P$. Let $B(a,b)$ 
be the box induced by $a,b \in P$ i.e, box $B(a,b)$ has $a$ and $b$ as diagonal points. We define $\mathcal{B}_p \subseteq \mathcal{B}$ as the set of boxes which contain 
a point $p \in \reals^d$. We look at a lower bound for $w^{\mathcal{B}}(n)$.

\begin{theorem}
\label{SelectionBoxesRdLower}
For $d \geq 2$, $w^{\mathcal{B}}(n) \geq \frac{n^2}{2^{(2^d - 1)}}$.
\end{theorem}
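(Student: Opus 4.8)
The plan is to induct on the dimension, peeling off one coordinate at a time, with the planar case (Theorem~\ref{thm:selectionRectanglesLower}, which gives $w^{\mathcal R}(n)=n^2/8=n^2/2^{\,2^{2}-1}$) as the base case $d=2$. Assume the bound holds in $\reals^{d}$ and let $P\subseteq\reals^{d+1}$ with $|P|=n$. Let $\pi\colon\reals^{d+1}\to\reals^{d}$ be the projection forgetting the last coordinate. Since $P$ is in general position, $\pi$ is injective on $P$ and $\pi(P)$ is again in general position, so the inductive hypothesis supplies a point $q\in\reals^{d}$ contained in at least $n^2/2^{\,2^{d}-1}$ of the $d$-dimensional boxes $B(\pi u,\pi v)$. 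Let $\mathcal A$ be the set of pairs $\{u,v\}\subseteq P$ realising these boxes and put $e:=|\mathcal A|\ge n^2/2^{\,2^{d}-1}$. The crucial observation is that for any real $t$, the point $p=(q,t)\in\reals^{d+1}$ lies in $B(u,v)$ for \emph{every} pair $\{u,v\}\in\mathcal A$ whose last coordinates straddle $t$ (i.e.\ with $t$ strictly between $u_{d+1}$ and $v_{d+1}$), because the projection of such a box already contains $q$. So the problem reduces to: choose $t$ straddling as many pairs of $\mathcal A$ as possible.

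To this end, associate with each pair $\{u,v\}\in\mathcal A$ the interval on the last-coordinate axis with endpoints $u_{d+1}$ and $v_{d+1}$; this gives $e$ pairwise distinct intervals whose endpoints all lie among the $n$ values $\{z_{d+1}:z\in P\}$. I would establish the following interval-depth lemma: \emph{given $e$ pairwise distinct intervals spanned by $n$ points on a line, some point is contained in at least $e^{2}/(2n^{2})$ of them}. The proof is a counting argument whose essential input is distinctness: for each $\ell\in\{1,\dots,n-1\}$ there are at most $n-\ell\le n$ distinct intervals spanning exactly $\ell$ elementary gaps, so if $k_\ell$ of our intervals span $\ell$ gaps then $\sum_\ell k_\ell=e$ with each $k_\ell\le n$; consequently the total coverage $C:=\sum_\ell \ell\,k_\ell$ (which also equals $\sum_{\text{gaps }g}\operatorname{depth}(g)$) is minimised by loading the short lengths first and is therefore at least $e^{2}/(2n)$ up to lower-order terms. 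Averaging $C$ over the $n-1$ elementary gaps yields a gap of depth $\ge C/(n-1)\ge e^{2}/(2n^{2})$; take $t$ in the interior of that gap, so that $p$ also avoids the coordinates of $P$. With this $t$ the point $p=(q,t)$ lies in at least
\[
\frac{e^{2}}{2n^{2}}\;\ge\;\frac{1}{2n^{2}}\left(\frac{n^{2}}{2^{\,2^{d}-1}}\right)^{2}\;=\;\frac{n^{2}}{2^{\,2^{d+1}-1}}
\]
induced boxes of $P$, which is exactly the claim in dimension $d+1$ (using $2(2^{d}-1)+1=2^{d+1}-1$), completing the induction.

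The routine parts are the general-position bookkeeping for $\pi$ and the exponent arithmetic. The step I expect to demand the most care is the interval-depth lemma: distinctness of the intervals is indispensable (otherwise $e$ disjoint copies of one short interval would have depth $1$, killing any bound of the form $e^{2}/n^{2}$), and one must handle the floor in ``$k_\ell=n$ for $\ell\lesssim e/n$'' and ensure $t$ is taken in the interior of a gap rather than at a vertex, so that $p$ stays in general position relative to $P$; tracking the constant there is what decides whether one gets the stated $e^{2}/(2n^{2})$ or a slightly weaker constant. Everything else is the standard dimension reduction for axis-parallel boxes, namely that a box in $\reals^{d+1}$ containing $p$ is exactly one whose $\reals^{d}$-projection contains $\pi(p)$ and whose last-coordinate slab contains the last coordinate of $p$.
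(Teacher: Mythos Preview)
Your proposal is correct and follows essentially the same route as the paper: induct on $d$ with the planar case as base, project off one coordinate, apply the inductive hypothesis to get a point $q$ hit by $e\ge n^{2}/2^{\,2^{d}-1}$ projected boxes, and then invoke a second-selection lemma for intervals on the remaining coordinate to find $t$ with depth $\ge e^{2}/(2n^{2})$, yielding the stated bound via $2(2^{d}-1)+1=2^{d+1}-1$. The only cosmetic difference is your proof of the interval lemma: the paper (Lemma~\ref{lem:secondSelectionIntervals}) groups intervals by left endpoint and applies Cauchy--Schwarz to get $\sum J_{c}\ge m^{2}/(2n)$, whereas you group by combinatorial length and minimise $\sum_{\ell}\ell k_{\ell}$ subject to $k_{\ell}\le n$; both give the same $e^{2}/(2n^{2})$ bound (your worry about the floor is unfounded---the greedy minimiser already satisfies $C\ge e^{2}/(2n)$ exactly), and both rely on distinctness of the intervals, which general position guarantees.
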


\begin{proof}
We prove this by induction on the dimension $d$. The base case $d=2$ is true from Theorem 
\ref{thm:selectionRectanglesLower}.

For $d \geq 3$, we assume that the statement is true for induced boxes in $d-1$ dimensions. 
We project all the points of $P$ orthogonally onto $h$, which is a $(d-1)$ dimensional hyperplane $x_d=0$. From the 
induction hypothesis, there exists a point $q = (q_1, \dots, q_{d-1})$ in this hyperplane which is present in 
$\frac{n^2}{2^{2^{(d-1)} - 1}}$ of the $(d-1)$ dimensional boxes induced by the projections of $P$ on $h$.

Consider the line perpendicular to the hyperplane $h$, which passes through $q$. The line $l$ passes through those $d$
dimensional boxes, whose projections onto $h$ contained $q$. We project only these boxes onto $l$ and look at
the problem of second selection lemma for intervals ($d=1$), where the number of points is $n$ and the number of 
intervals is $\frac{n^2}{2^{2^{(d-1)} - 1}}$. From lemma \ref{lem:secondSelectionIntervals}, we see that there exists 
a point $q_d$ in $\frac{n^2}{2^{2^d - 1}}$ intervals. This in turn gives us a point $r = (q_1, \dots, q_d)$ which is 
present in the corresponding boxes in $\reals^d$.
\end{proof}

\begin{theorem}
\label{SelectionBoxesRdUpper}
For $d \geq 2$, $w^{\mathcal{B}}(n) \leq \frac{n^2}{2^{d+1}} + o(n^2)$
\end{theorem}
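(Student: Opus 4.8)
The plan is to exhibit an explicit point configuration in $\reals^d$ for which no point of $\reals^d$ is pierced by more than $\frac{n^2}{2^{d+1}} + o(n^2)$ induced boxes, mirroring the upper bound constructions used for rectangles and slabs. The natural candidate, extending the ``points on a circle'' construction from Theorem~\ref{thm:selectionRectanglesLower}, is to place the $n$ points so that their projections onto each coordinate axis are (roughly) evenly spread and, more importantly, so that the sign pattern of a point relative to any query point $p$ behaves like an independent fair coin in each coordinate. Concretely, I would take a ``moment-curve-like'' or generic placement on a suitable convex surface (e.g. points $v_i$ with $x_d$-coordinate the dominant, slowly varying coordinate and the first $d-1$ coordinates distributed like the planar circle construction), and then argue that for any $p\in\reals^d$, if $p$ splits $P$ into the $2^d$ orthants determined by the $d$ axis-parallel hyperplanes through $p$, with $n_\sigma$ points in orthant $\sigma\in\{+,-\}^d$, then
\begin{equation*}
f^{\mathcal{B}}_p \;=\; \tfrac12\sum_{\sigma} n_\sigma\, n_{\bar\sigma},
\end{equation*}
since a box $B(a,b)$ contains $p$ exactly when $a$ and $b$ lie in ``antipodal'' orthants $\sigma$ and $\bar\sigma$ (the orthant obtained by flipping every sign).

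Given this identity, the remaining task is purely combinatorial: maximize $\tfrac12\sum_\sigma n_\sigma n_{\bar\sigma}$ subject to $\sum_\sigma n_\sigma = n$ and whatever constraints the construction forces on the $n_\sigma$. With no constraints, this sum is maximized by concentrating mass on a single antipodal pair, giving $\sim n^2/8$, which is too large; so the whole point of the construction is to force the orthant counts to be balanced. I would design $P$ so that along each coordinate hyperplane through any $p$ the split is forced to be roughly $n/2$ on each side (this is what ``points evenly on a circle / sphere'' buys us in the plane), and more strongly so that the joint distribution of sign patterns is close to the product distribution: $n_\sigma \approx n/2^d$ for all $\sigma$, up to lower-order deviations. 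Then
\begin{equation*}
f^{\mathcal{B}}_p \;=\; \tfrac12\sum_{\sigma} n_\sigma n_{\bar\sigma} \;\le\; \tfrac12 \cdot 2^d \cdot \Bigl(\tfrac{n}{2^d}\Bigr)^2 + o(n^2) \;=\; \frac{n^2}{2^{d+1}} + o(n^2),
\end{equation*}
which is exactly the claimed bound. The cleanest way to make ``product distribution'' rigorous is probably to start from the marginals: use that each of the $d$ hyperplanes through $p$ can be forced to be balanced, and that a careful choice of surface makes the pairwise (and higher) correlations vanish to leading order, so a second-moment / convexity argument pins $\sum_\sigma n_\sigma n_{\bar\sigma}$ near its balanced value from above.

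The key steps, in order: (1) state the identity $f^{\mathcal{B}}_p = \tfrac12\sum_\sigma n_\sigma n_{\bar\sigma}$ and verify it from the definition of an induced box (a point is in $B(a,b)$ iff in each coordinate it lies between $a$ and $b$, i.e.\ iff $a,b$ are on opposite sides in every coordinate); (2) describe the point set $P$ explicitly and prove the balance property — for every $p$ and every coordinate $i$, the hyperplane $x_i = p_i$ has $n/2 \pm o(n)$ points of $P$ on each side; (3) upgrade balance to near-independence of the $d$ sign coordinates, so that $n_\sigma = n/2^d + o(n)$ for every orthant $\sigma$; (4) plug into the identity and bound, using $\sum_\sigma n_\sigma n_{\bar\sigma} \le 2^d \max_\sigma n_\sigma^2$ together with the near-uniform bound, absorbing all deviations into $o(n^2)$.

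The main obstacle is step (3): guaranteeing that the orthant counts are simultaneously near-uniform for \emph{every} query point $p$, not just on average. Mere balance of each individual coordinate split does not by itself prevent correlations (e.g.\ a configuration could be balanced in each coordinate yet place all points in two antipodal orthants). I expect one of two fixes to work — either (a) choose $P$ on a generic smooth strictly convex hypersurface in ``general position'' so that any axis-parallel $d$-box boundary meets it in a controlled way and the counts vary smoothly, letting an elementary continuity/volume argument bound the orthant imbalance by $o(n)$; or (b) sidestep the product-distribution claim entirely by a direct inductive argument over dimension that parallels the lower-bound proof of Theorem~\ref{SelectionBoxesRdLower}: project $P$ onto a coordinate hyperplane, apply the $(d-1)$-dimensional upper bound to control the pierced boxes whose projections are pierced, and then handle the remaining one-dimensional interval-piercing slack, showing the factor improves from $1/2^{2^{d-1}-1}$-type losses to the clean $1/2^{d+1}$ by being careful that the extremal one-dimensional piercing contributes only a factor $1/4$ at each of the $d$ coordinates. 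I would try the explicit construction (route (a)) first, since it gives the cleanest matching bound, and fall back to the inductive route if controlling the correlations proves delicate.
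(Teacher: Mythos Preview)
Your identity $f^{\mathcal{B}}_p = \tfrac12\sum_\sigma n_\sigma n_{\bar\sigma}$ is correct, and you have correctly isolated the entire difficulty as step~(3): producing a point set for which the orthant counts $n_\sigma$ are close to $n/2^d$ for \emph{every} $p$. But you never actually exhibit such a set. Your route~(a) is only a hope --- on a sphere or a moment curve the sign patterns in different coordinates are genuinely correlated (if one coordinate is extreme the others are constrained), and nothing you wrote rules out the bad scenario you yourself flag, where mass concentrates on a few orthants. Your route~(b) is miscast: induction via projection is a lower-bound technique; for an upper bound you must \emph{build} a $d$-dimensional set, and ``lifting'' a good $(d-1)$-dimensional set is exactly the unresolved step.

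The missing idea is that there is a construction which makes step~(3) trivial rather than delicate: take $P$ to be a uniform $n^{1/d}\times\cdots\times n^{1/d}$ grid (this is what the paper does). The product structure forces the coordinate signs to be \emph{exactly} independent: if the hyperplane $x_i=p_i$ has $\frac{n^{1/d}}{2}-r_i$ grid values on one side and $\frac{n^{1/d}}{2}+r_i$ on the other, then $n_\sigma = \prod_i\bigl(\frac{n^{1/d}}{2}+\sigma_i r_i\bigr)$, so $n_\sigma n_{\bar\sigma} = \prod_i\bigl(\frac{n^{2/d}}{4}-r_i^2\bigr)$ is the same for every antipodal pair, and
\[
f^{\mathcal{B}}_p \;=\; 2^{d-1}\prod_{i=1}^d\Bigl(\tfrac{n^{2/d}}{4}-r_i^2\Bigr)\;\le\; 2^{d-1}\cdot\frac{n^2}{4^d}\;=\;\frac{n^2}{2^{d+1}}.
\]
The $o(n^2)$ slack absorbs the degenerate boxes coming from repeated coordinates (or from a tiny perturbation into general position). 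So the right move was not to extend the circle construction but to replace it: the circle was a proxy for coordinate independence in $\reals^2$, and in $\reals^d$ the clean way to get independence is a product set.
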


\begin{proof}
Consider a set of $n$ points which is arranged as a uniform $2^k \times 2^k \times ....\times 2^k$ $d$-dimensional 
grid, where $k =\frac{1}{d}$. \footnote{We eliminate all the degenerate rectangles i.e. the set of all rectangles induced 
by two points which are along any row of the grid in each dimension. Note that for $d \geq 2$, the number of such 
degenerate rectangles is at most $dn^{(1+\frac{1}{d})} = o(n^2)$.}

Construct $d$ hyperplanes parallel to the coordinate axes, $H = \{h_1, h_2, ..., h_d\}$ which are perpendicular to 
each other, each of which divides the grid into 2 halves, containing $\frac{n}{2}$ points each. Let the intersection 
point of these $d$ hyperplanes be $p$. Now, each of the $2^d$ orthants defined by $H$ contains a smaller $d$-dimensional 
uniform grid of size $\frac{n^{\frac{1}{d}}}{2} \times \frac{n^{\frac{1}{d}}}{2} \cdot \cdot \cdot \times 
\frac{n^{\frac{1}{d}}}{2}$. Thus, each of the orthants contain exactly $\frac{n}{2^d}$ points. The number of boxes which 
contain $p$ is given by $(2^{d-1} \cdot \frac{n^2}{2^{2d}})$. The term $2^{d-1}$ is the number of opposite orthant pairs, 
whose points contribute to a box containing $p$. Thus, $|\mathcal{B}_p|=\frac{n^2}{2^{d+1}}$.

Consider any point $q \in \reals^d$ (not necessarily from $P$), which is present inside the grid. Construct $d$ orthogonal 
hyperplanes $L = \{l_1, l_2, ...., l_d\}$ parallel to $H$, all of which intersect at $q$. Let $r_k$ be the number of grid 
points in the $k^{th}$ dimension between $p$ and $q$ ($h_i$ and $l_i$).

Consider the $d$-dimensional uniform grids present in each of the orthants formed by $L$. Let us fix a dimension $k$,
where $k \in [d]$. Consider any orthant $O$ realized by $L$ and let $G'$ be the grid present in $O$. Let $n_1$ be the 
number of points present in the $k^{th}$ dimension in $O$. This means that the diagonally opposite orthant $O'$ to 
$O$ contains $n_2 = n^{\frac{1}{d}} - n_1$ points in the $k^{th}$ dimension. W.l.o.g, let 
$n_1 = \frac{n^{\frac{1}{d}}}{2}-r_k$ and thus, $n_2=\frac{n^{\frac{1}{d}}}{2}+r_k$, where $0 \leq r_i \leq 
\frac{n^{\frac{1}{d}}}{2}, \forall i \in [d]$. This is true for points along any dimension $1 \leq k \leq d$, in any 
orthant defined by $L$. W.l.o.g, let $G'$ be of size $\displaystyle \prod_{i=1}^d \left(\frac{n^{\frac{1}{d}}}{2}-r_i\right)$. $O'$ 
will then have a grid of size $\displaystyle \prod_{i=1}^d \left(\frac{n^{\frac{1}{d}}}{2}+r_i\right)$. Thus, the number of induced 
boxes contributed to $\mathcal{B}_q$ by diagonally opposite orthants $O$ and $O'$, is $\displaystyle \prod_{i=1}^d 
(\frac{n^{\frac{1}{d}}}{2}-r_i)(\frac{n^{\frac{1}{d}}}{2}+r_i) = \prod_{i=1}^d (\frac{n^{\frac{2}{d}}}{4}-r_i^2)$. Since, 
this is true for every octant (having different combinations of$(\frac{n^{\frac{1}{d}}}{2} \rpm r_i), \forall i \in [d]$), 
we get the same term in $\mathcal{B}_q$ for every pair of opposite orthants. The number of such orthant pairs is $2^{d-1}$ 
and thus, $\vert \mathcal{B}_q \vert$ is given by -

\begin{equation*}
\begin{split}
\vert \mathcal{B}_q \vert & = 2^{d-1} \cdot (\frac{n^{\frac{2}{d}}}{4} - r_1^2) \cdot (\frac{n^{\frac{2}{d}}}{4} - r_2^2) 
\dots (\frac{n^{\frac{2}{d}}}{4} - r_d^2) \\
\implies \vert \mathcal{B}_q \vert & \leq  \frac{n^2}{2^{d+1}}
\end{split}
\end{equation*}

The point $q$ is chosen arbitrarily and thus, any point in $\reals^2$ is present in at most $\frac{n^2}{2^{(d+1)}}$ induced
boxes.
\end{proof}

\section{Hyperspheres in $\mathbb{R}^d$}
\label{disks}

Let $P$ be a set of $n$ points in $\reals^d$ and $\mathcal{C}$ be the set of $n \choose 2$ hyperspheres induced by $P$. Let $C(a,b)$ 
be the hypersphere induced by $a,b \in P$ i.e, $C(a,b)$ has $a$ and $b$ as diametrically opposite points.  

\subsection{Weak Variant for hyperspheres in $\reals^d$}

In this section, we obtain bounds for $w^\mathcal{C}(n)$.
\subsubsection{General Point Sets}
\begin{lemma}\label{circ_lemm}
$w^\mathcal{C}(n) \geq \frac{n^2}{2(d+1)}$
\end{lemma}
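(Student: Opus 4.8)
The statement to prove is $w^{\mathcal{C}}(n) \geq \frac{n^2}{2(d+1)}$, i.e.\ there is a point in $\reals^d$ contained in at least $\frac{n^2}{2(d+1)}$ of the induced hyperspheres $C(a,b)$. The natural candidate for the piercing point is the centerpoint $c$ of $P$, which by definition lies in every closed halfspace containing more than $\frac{d}{d+1}n$ points, or equivalently every halfspace through $c$ contains at least $\frac{n}{d+1}$ points of $P$. The key geometric fact I would use is the classical observation underlying the Delaunay/diametral-sphere connection: for two points $a,b \in P$, the hypersphere $C(a,b)$ with $a,b$ as a diameter contains a point $x$ if and only if the angle $\angle a x b \geq 90^\circ$; in particular, if $x=c$ is the centerpoint, then $c \in C(a,b)$ whenever $a$ and $b$ lie in "opposite" directions as seen from $c$, made precise by the inner product condition $\langle a - c,\, b - c\rangle \leq 0$.

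First I would set up the counting. Let $c$ be a centerpoint of $P$ (general position lets us assume $c \notin P$ and no $a-c$ is orthogonal to another $b-c$, up to a perturbation argument). For each $a \in P$, consider the open halfspace $H_a = \{x : \langle x - c,\, a - c\rangle < 0\}$, the halfspace "on the far side" of $c$ from $a$. Any $b \in P \cap H_a$ gives $\langle a - c, b - c\rangle < 0$, hence $\angle a c b > 90^\circ$, hence $c \in C(a,b)$. So $f_c^{\mathcal{C}} \geq \frac{1}{2}\sum_{a \in P} |P \cap H_a|$ (the factor $\frac12$ because each pair $\{a,b\}$ with $c \in C(a,b)$ is counted from both $a$'s and $b$'s side — this needs a small check that both $\langle a-c,b-c\rangle<0$ holds symmetrically, which it does). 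Since $H_a$ is an open halfspace whose bounding hyperplane passes through $c$, the centerpoint property gives $|P \cap H_a| \geq \frac{n}{d+1}$ for every $a$ (one has to be slightly careful: the centerpoint guarantee is usually for closed halfspaces containing $c$; I would phrase it as: the complement closed halfspace contains $\leq \frac{d}{d+1}n$ points, so the open halfspace contains $\geq \frac{n}{d+1}$, assuming no point of $P$ lies on the bounding hyperplane, which general position grants). Summing, $f_c^{\mathcal{C}} \geq \frac{1}{2} \cdot n \cdot \frac{n}{d+1} = \frac{n^2}{2(d+1)}$.

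The main obstacle — really the only delicate point — is making the "angle $\geq 90^\circ$ implies containment in the diametral sphere" step airtight and getting the constant exactly right rather than off by a factor of $2$. The clean way is: $C(a,b)$ has center $m = \frac{a+b}{2}$ and radius $\frac{|a-b|}{2}$, and $c \in C(a,b) \iff |c - m|^2 \leq \frac{|a-b|^2}{4} \iff |c-m|^2 - \tfrac14|a-b|^2 \leq 0$. Expanding with $c - m = (c-a)/2 + (c-b)/2$ and $(a-b) = (a-c)-(b-c)$, this reduces precisely to $\langle a-c,\, b-c\rangle \leq 0$. So the condition is exactly the inner-product sign condition, and the double-counting factor is exactly $2$, yielding the stated bound with no slack. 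I would also note the subtle point about whether to use open or closed halfspaces and about points on bounding hyperplanes; invoking general position ("no $d+2$ points on a common hypersphere," plus a harmless perturbation so that $c$ avoids all the relevant hyperplanes) handles it. If one does not want to assume $c \notin P$, a trivial case analysis or limiting argument covers it.
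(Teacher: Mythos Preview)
Your proof is correct and follows essentially the same approach as the paper: take $c$ to be the centerpoint of $P$, for each $a\in P$ consider the halfspace through $c$ with outward normal $a-c$, use the centerpoint property to get at least $\frac{n}{d+1}$ points of $P$ in that halfspace, observe that each such point $b$ satisfies $\angle acb\geq 90^\circ$ so $c\in C(a,b)$, and divide by two for double counting. Your write-up is in fact more careful than the paper's about the equivalence $c\in C(a,b)\iff\langle a-c,b-c\rangle\leq 0$ and about the open/closed halfspace distinction, but the argument is the same.
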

\begin{proof}
Let $c$ be the centerpoint of $P$. Therefore any halfspace that contains $c$ contains at least $\frac{n}{d+1}$ points. 
We claim that $c$ is contained in at least $\frac{n^2}{2(d+1)}$ induced hyperspheres.

Let $p$ be any point in $P$. Let $H$ be the halfspace that contains $c$ and whose outward normal is $\vec{cp}$. $H$ 
contains at least $\frac{n}{d+1}$ points from $P$. Now, $c$ is contained in a hypersphere induced by $p$ and any point 
$p_1$ in $H$ since $\angle pcp_1 > 90\degree$. Thus $c$ is contained in at least $\frac{n}{d+1}$ induced hyperspheres 
where one of the inducing points is $p$. As this is true for any point in $P$, $c$ is contained in $\frac{n^2}{2(d+1)}$ 
induced hyperspheres.  
\end{proof}

The upper bound construction is a trivial one and comes from the arrangement of $P$ as a monotonically increasing line in 
$\reals^d$. This gives us that any point $p \in \reals^d$ is present in at most $\frac{n^2}{4}$ hyperspheres.
\subsubsection{Centrally Symmetric Point Set}
In this section, we prove tight bounds for a special class of point sets viz. centrally symmetric point sets. Let $P$ be a centrally symmetric point set w.r.t origin i.e., for any point $p \in P$, $-p$ also belongs to $P$.

\begin{theorem}
 $w^\mathcal{C}(n) =\frac{n^2}{4}$
\end{theorem}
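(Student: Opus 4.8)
The plan is to establish the lower bound $w^\mathcal{C}(n) \geq \frac{n^2}{4}$ by exhibiting a single point — the center of symmetry, i.e.\ the origin $o$ — that is pierced by at least $\frac{n^2}{4}$ of the induced hyperspheres, and then to observe that the trivial upper bound $w^\mathcal{C}(n) \leq \frac{n^2}{4}$ already follows from the monotone-line construction mentioned just above the statement (the one giving the $\frac{n^2}{4}$ upper bound for general point sets), since a centrally symmetric point set can realize that configuration or, more simply, since the upper bound $w^\mathcal{C}(n)\le n^2/4$ holds for \emph{all} point sets and hence in particular for centrally symmetric ones. So the real content is the lower bound.

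For the lower bound I would argue as follows. Pair up the points of $P$ into $\frac{n}{2}$ antipodal pairs $\{p, -p\}$. For a point $p \in P$, count the hyperspheres $C(p, q)$ through $o$: since $C(p,q)$ has $p$ and $q$ as diametrically opposite points, $o$ lies inside $C(p,q)$ exactly when the angle $\angle p\, o\, q$ is obtuse (strictly greater than $90\degree$), equivalently when $\langle p - o, q - o\rangle < 0$, i.e.\ when $q$ lies in the open halfspace through $o$ with outward normal $\vec{op}$. Call this halfspace $H_p$; its antipode $H_{-p} = H_{p}^{\text{opp}}$ is the complementary open halfspace. Because $P$ is centrally symmetric about $o$, reflection through $o$ is a bijection $P \to P$ that swaps $H_p \cap P$ with $H_{-p} \cap P$; hence $|H_p \cap P| = |H_{-p} \cap P|$, and since together (with the hyperplane, which contains no points of $P$ by general position) they account for all $n$ points, each contains exactly $\frac{n}{2}$ points of $P$. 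Therefore $o$ lies in exactly $\frac{n}{2}$ hyperspheres of the form $C(p, \cdot)$ for each fixed $p$.

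Summing over all $p \in P$ counts each hypersphere $C(p,q) \ni o$ exactly twice (once from $p$, once from $q$), so the number of induced hyperspheres containing $o$ is $\frac{1}{2} \cdot n \cdot \frac{n}{2} = \frac{n^2}{4}$, giving $w^\mathcal{C}(n) \geq \frac{n^2}{4}$. The only delicate point — and the step I expect to require the most care — is the claim that each halfspace $H_p$ contains exactly $\frac{n}{2}$ points of $P$: this uses both central symmetry (to get the two complementary halfspaces balanced) and the general-position hypothesis (no point of $P$ lies on the bounding hyperplane through $o$ normal to $\vec{op}$; if $o \notin P$ this is automatic from general position, and if $o \in P$ one simply excludes $p$ itself and re-counts, still obtaining the bound $\frac{n^2}{4}$ up to lower-order terms, or notes that centrally symmetric sets of even size can be taken with $o \notin P$). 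Combining the two bounds yields $w^\mathcal{C}(n) = \frac{n^2}{4}$.
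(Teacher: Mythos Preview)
Your argument is correct and matches the paper's approach almost exactly: both take the center of symmetry $o$ as the piercing point, observe that for each $p\in P$ the halfspace through $o$ with outward normal $\vec{op}$ contains $\frac{n}{2}$ points of $P$ by central symmetry, and conclude that $o$ lies in $\frac{n^2}{4}$ induced hyperspheres; the upper bound comes from the monotone line, which can indeed be taken centrally symmetric.

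One logical slip to fix: your ``or, more simply'' alternative for the upper bound is backwards. The quantity $w^{\mathcal C}(n)$ is a \emph{minimum} over point sets, so restricting to the smaller class of centrally symmetric point sets can only make the minimum \emph{larger}; an upper bound for general $P$ does not automatically transfer to the centrally symmetric case. Your first justification --- that the monotone-line construction is itself (or can be arranged to be) centrally symmetric --- is the correct one, and is exactly what the paper uses.
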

\begin{proof}
 The proof is similar to that of lemma \ref{circ_lemm}. 
 
 Let $o$ be the origin of the centrally symmetric point set $P$. Let $p$ be any point in $P$. Let $H$ be the halfspace that contains $o$ and whose outward normal is $\vec{op}$. $H$ 
contains $\frac{n}{2}$ points from $P$ since for any point $p_1 \in P\setminus (H \cap P)$, $-p \in H \cap P$. By reasons similar to lemma  \ref{circ_lemm}, $o$ is contained in  at least $\frac{n^2}{4}$ induced hyperspheres. 

To prove the upper bound, consider points arranged uniformly along a monotonically increasing line in $\mathbb{R}^d$.
\end{proof}

\subsection{Strong Variant for disks in $\mathbb{R}^2$}
In this section, we obtain bounds on $s^\mathcal{C}(n)$ when $\mathcal{C}$ is the family of induced disks in $\mathbb{R}^2$. 
\subsubsection{General Point Sets}
\begin{theorem}
 $\frac{n^2}{16} \leq s^\mathcal{C}(n) \leq \frac{n^2}{9}$
\end{theorem}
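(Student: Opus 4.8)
The plan is to prove the two bounds separately, mirroring the structure used for rectangles and orthants. For the lower bound $s^\mathcal{C}(n) \ge \frac{n^2}{16}$, I would find a point $p \in P$ that acts like a "strong centerpoint for disks" — specifically, a point such that every disk containing more than $\frac{3n}{4}$ points of $P$ contains $p$. Such a point is known to exist (the strong centerpoint for disks, analogous to \cite{AGK10}); if no clean citation is available, one can argue directly that the strong centerpoint for axis-parallel rectangles also works, since every rectangle that contains a disk forces a containment relation, or one can build it by hand using a halving-line argument. Given such a $p$, I would draw the horizontal and vertical lines through $p$, splitting $P$ into quadrants $A,B,C,D$, and observe: a disk induced by a diametrical pair $u,v$ contains $p$ whenever $\angle upv > 90^\circ$, which certainly holds when $u,v$ lie in diagonally opposite quadrants. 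So $f^\mathcal{C}_p \ge |A||C| + |B||D|$, and the strong-centerpoint property bounds how unevenly the quadrants can be loaded (each half-plane through $p$ has at most $\frac{3n}{4}$ points). The calculation is then essentially the same case analysis as in Theorem~\ref{strong_rect}, yielding $f^\mathcal{C}_p \ge \frac{n^2}{16}$.

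For the upper bound $s^\mathcal{C}(n) \le \frac{n^2}{9}$, I would exhibit a point set $P$ on which no point of $P$ lies in more than $\frac{n^2}{9}$ induced disks. The natural candidate is $n$ points placed uniformly on a circle, as in Figure~\ref{fig:upperBoundStrong}. For a point $p \in P$ on the circle that is $k$ positions from some reference point, a disk induced by a diametrical pair $u,v$ contains $p$ iff $p$ lies inside that disk, which is a purely angular condition: writing the arc positions as angles, $p$ is in the disk with diameter $uv$ iff $\angle upv \ge 90^\circ$, i.e. iff $u$ and $v$ lie on opposite sides of the diameter of the big circle through $p$ — but more carefully, iff the arc from $u$ to $v$ not containing $p$ subtends an angle $\ge 180^\circ$ at the center. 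Counting the pairs $(u,v)$ satisfying this for a fixed $p$ gives a quadratic function of $k$ (the number of points between $p$ and the "antipodal" position), which I expect to be maximized at a balanced split and to evaluate to $\frac{n^2}{9}$ — note the $\frac19$ rather than the $\frac1{16}$ of rectangles, reflecting that the "contains $p$" region for disks through a circle point is a $\frac13$–$\frac23$ arc split at optimum rather than a $\frac14$–$\frac34$ one. I would verify this by reducing to counting chords $uv$ whose far arc exceeds a semicircle.

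The main obstacle is the upper bound: getting the precise angular characterization of when a circle point $p$ lies in the disk with diameter $uv$ (both $u,v$ on the same circle), and then correctly counting such pairs as a function of $p$'s position to obtain exactly $\frac{n^2}{9}$. One must be careful that $u,v,p$ are all co-circular, so the inscribed-angle theorem applies and the condition $\angle upv \ge 90^\circ$ translates into a statement about the arc lengths; the optimization over the position of $p$ then needs the extremal configuration to be identified. A secondary point to nail down is the existence and properties of the strong centerpoint for disks used in the lower bound — if the literature does not supply it directly, I would include a short self-contained construction via a pair of halving lines, as in Lemma~\ref{orth_cp} and Lemma~\ref{cp_sky}, establishing that some input point is pierced by every disk containing more than $\frac{3n}{4}$ points of $P$.
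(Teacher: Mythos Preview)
Your lower bound argument is fine and matches the paper's: the paper simply observes that the axis-parallel rectangle $R(u,v)$ is contained in the diametral disk $C(u,v)$, so the strong centerpoint for rectangles (Theorem~\ref{strong_rect}) immediately gives $f_p^{\mathcal C}\ge f_p^{\mathcal R}\ge n^2/16$. Your quadrant observation that $\angle upv>90^\circ$ whenever $u,v$ lie in diagonally opposite quadrants of $p$ is exactly the same fact stated geometrically.

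The upper bound, however, has a genuine gap: the uniform circle does \emph{not} give $n^2/9$. If $P$ consists of $n$ equally spaced points on a circle and $p\in P$, then by the inscribed angle theorem $\angle upv\ge 90^\circ$ iff the arc $uv$ not containing $p$ has length at least $\pi$; counting such pairs gives $\sum_{k=1}^{n/2-1}k\approx n^2/8$, and by symmetry every $p\in P$ attains the \emph{same} value---there is no parameter to optimise and no ``$\tfrac13$--$\tfrac23$ split'' appears. (Indeed, the paper uses precisely this circle computation to show $s^{\mathcal C}(n)\le n^2/8$ for the centrally symmetric case.) So your proposed construction would only yield $s^{\mathcal C}(n)\le n^2/8$, strictly weaker than the claimed $n^2/9$.

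To reach $n^2/9$ the paper uses a different construction, taken from Conway--Croft--Erd\H{o}s--Guy~\cite{CCEG79}: three clusters of $n/3$ points placed on short arcs near the vertices of a carefully chosen triangle, with the angles tuned so that the only obtuse triangles $\triangle pqr$ are of a few prescribed types. For a point $p$ in cluster $A$ at position $x$ within its arc, one then counts $f_p^{\mathcal C}=x(\tfrac n3-x)+\tfrac n3(\tfrac n3-x)=\tfrac{n^2}{9}-x^2\le \tfrac{n^2}{9}$. The key idea you are missing is that to beat $n^2/8$ one must engineer a point set in which most triples form \emph{acute} triangles, and the three-cluster construction is designed precisely for that.
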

\begin{proof}
 The lower bound follows from theorem \ref{strong_rec} since the axis-parallel rectangle induced by two points $p,q$ are completely contained inside the disk induced by $p$ and $q$.

To prove the upper bound, we use a configuration from \cite{CCEG79}. $n$ points are arranged as equal subsets of $\frac{n}{3}$ points, each along small circular arcs at the vertices of a triangle $\triangle ABC$. Let $A_1=\{a_1,a_2,\cdots,a_{\frac{n}{3}}\}$ represent the points near the vertex $A$. Similarly, let $B_1=\{b_1,b_2,\cdots,b_\frac{n}{3}\}$ represent the points near vertex $B$ and $C_1=\{c_1,c_2,\cdots,c_\frac{n}{3}\}$ represent the points near vertex $C$. The angles of the triangle and the length of the arcs are so selected such that the only obtuse-angled triangles are of type $\triangle a_ia_ja_k,\triangle b_ib_jb_k,\triangle c_ic_jc_k,\triangle a_ib_jb_k,\triangle b_ic_jc_k,\triangle c_ia_ja_k$ where $1 \leq i,j,k \leq \frac{n}{3}$(See section 5 in \cite{CCEG79}).

We claim that any point $p \in P$ is contained in at most $\frac{n^2}{9}$ induced disks. W.l.o.g assume that $p\in A_1$. Also assume that $p$ has $x$ points of $A_1$ above it(i.e, away from $C$). The triangle with one vertex as $p$ is obtuse when both the other two vertices are from $A_1$ or $B_1$ or one of them is from $A_1$ and the other is from $C_1$. When both the vertices are from $B_1$, the angle subtended at $p$ is acute. The angles are obtuse in the following cases:
\begin{enumerate}
 \item The other two vertices are $a_i$ and $a_j$, $1\leq i,j \leq \frac{n}{3}$ and $a_i$ and $a_j$ lies on either side of $p$ in $A_1$.
\item The other two vertices are $a_i$ and $c_j$, $1\leq i,j \leq \frac{n}{3}$ and $a_i$ lies above $p$ in $A_1$.
\end{enumerate}
Therefore,
\begin{eqnarray*}
 f^\mathcal{C}_p&=& x(\frac{n}{3}-x)+\frac{n}{3}(\frac{n}{3}-x)\\
&=& \frac{n^2}{9}-x^2
\end{eqnarray*}

The value of $f^\mathcal{C}_p$ is maximized when $x=0$. Therefore,
\\$f^\mathcal{C}_p \leq \frac{n^2}{9}$.

\end{proof}
\subsubsection{Centrally Symmetric Point sets}
In this section, we prove tight bounds for centrally symmetric point sets. Let $P$ be a centrally symmetric point set w.r.t origin. 

\begin{theorem}
 $s^\mathcal{C}(n) =\frac{n^2}{8}$
\end{theorem}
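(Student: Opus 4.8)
The plan is to prove both bounds directly, paralleling the weak centrally symmetric case and the strong rectangle argument. Throughout I use that $p\in C(a,b)$ iff $\angle apb\ge \pi/2$, i.e. iff $\langle a-p,\,b-p\rangle\le 0$.

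\emph{Upper bound.} I would take $P$ to consist of $n$ points placed uniformly on a circle centred at the origin $o$; for $n$ even this set is centrally symmetric. Fix any $p\in P$ and label the remaining points $0,1,\dots,n-2$ in circular order starting just after $p$. By the inscribed angle theorem $\angle apb$ equals half the arc from $a$ to $b$ not containing $p$, so $p\in C(a,b)$ iff the arc from $a$ to $b$ that \emph{does} contain $p$ has length at most $\pi$; for points in positions $i<j$ this is exactly $j-i\ge n/2$. Counting these pairs gives $f_p^{\mathcal C}=\tfrac{n^2}{8}-\tfrac n4$ for every $p\in P$, hence $s^{\mathcal C}(n)\le \tfrac{n^2}{8}$.

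\emph{Lower bound.} Note the containment bound of the previous theorem only gives $s^{\mathcal C}(n)\ge n^2/16$, so central symmetry must genuinely be used to double it. I would take $p\in P$ to be a point of minimum distance to $o$ (on the extremal circular configuration every point is equally close, which is why that configuration is tight). Writing, for fixed $q\ne p$, $H_q=\{x:\langle q-p,\,x-p\rangle\le 0\}$ for the closed halfplane through $p$ on the far side from $q$, we have $q'\in H_q\iff p\in C(q,q')$, so $f_p^{\mathcal C}=\tfrac12\sum_{q\ne p}|H_q\cap(P\setminus\{p\})|$. Central symmetry enters as follows: the direction from $p$ to $-p$ is $\widehat{o-p}$, and for each antipodal pair $\{q,-q\}$ with $q\ne\pm p$ the directions $\widehat{q-p}$ and $\widehat{-q-p}$ lie on opposite sides of $\widehat{o-p}$ within an arc of length $<\pi$ (since $q,o,-q$ are collinear with $o$ in the middle). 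Combined with the minimality of $|p|$ — which forces $|q|\ge |p|$, making each such straddling arc ``wide'' — this should pin the direction multiset seen from $p$ down enough to conclude $\sum_{q\ne p}|H_q\cap P|\ge \tfrac{n^2}{4}-O(n)$, matching the circular configuration and giving $s^{\mathcal C}(n)\ge \tfrac{n^2}{8}$.

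The main obstacle is exactly this quantitative step. A naive averaging over all $p\in P$ is provably insufficient: for a point set densely filling a disk the average of $f_p^{\mathcal C}$ is strictly below $n^2/8$ (the bound being attained only near the centre), so the argument must use the specific choice $p=\arg\min|p|$. The concrete way I would push it through is the angular double count: group $P\setminus\{\pm p\}$ into antipodal pairs, and for two such pairs $\{q,-q\},\{q',-q'\}$ examine the $2^2$ sign choices, i.e. the four disks $C(\pm q,\pm q')$; writing $s=\langle q-p,q'-p\rangle$, $u=2\langle q-p,p\rangle$, $v=2\langle q'-p,p\rangle$ and $c=4|p|^2$, membership of $p$ in these four disks is $s\le 0$, $s+u\ge 0$, $s+v\ge 0$, $s+u+v+c\le 0$. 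The delicate point is to show that, using $c=4|p|^2$ minimal and central symmetry, on average over the $\binom{(n-2)/2}{2}$ pairs of antipodal pairs at least one of these four inequalities holds, which is equivalent to $f_p^{\mathcal C}\ge \tfrac{n^2}{8}-O(n)$.
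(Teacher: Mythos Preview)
Your upper bound is essentially the paper's and is fine.

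For the lower bound you have the right point $p$ (a point closest to the origin) and the right decomposition into antipodal pairs, and your four membership conditions for $C(\pm q,\pm q')$ are correct. But you stop short of the crucial step, and you even weaken the target unnecessarily: it is not merely that \emph{on average} one of the four inequalities holds, but that for \emph{every} pair of antipodal pairs $\{q,-q\},\{q',-q'\}$ (both distinct from $\{p,-p\}$) the point $p$ lies in at least one of the four cross disks $C(\pm q,\pm q')$. Minimality of $|p|$ is used only to secure the hypotheses $p\in C(q,-q)$ and $p\in C(q',-q')$ (equivalently, $\angle qp(-q)\ge\pi/2$ and $\angle q'p(-q')\ge\pi/2$); after that the conclusion is a short angular argument that your proposal does not supply.

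Here is the missing argument, which is the paper's. Assume for contradiction that all four angles $\angle(\pm q)\,p\,(\pm q')$ are acute. Let $K_q=\{x:\langle q-p,x-p\rangle\ge 0\}$ be the closed halfplane through $p$ containing $q$, and define $K_{-q}$ analogously. Acuteness of $\angle qpq'$ and $\angle qp(-q')$ forces $q',-q'\in K_q$; acuteness of $\angle(-q)pq'$ and $\angle(-q)p(-q')$ forces $q',-q'\in K_{-q}$. Hence both $q'$ and $-q'$ lie in the wedge $K_q\cap K_{-q}$, whose opening angle is $\pi-\angle qp(-q)<\pi/2$; this makes $\angle q'p(-q')<\pi/2$, contradicting $p\in C(q',-q')$. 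Thus every pair of antipodal pairs contributes at least one cross disk through $p$, and together with the $(n-2)/2$ diametral disks $C(q,-q)$ this already yields $\binom{(n-2)/2}{2}+\tfrac{n-2}{2}\sim n^2/8$ disks, with no averaging and no further use of the minimality of $|p|$.
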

\begin{proof}

\textbf{Lower Bound}
\\

 Let $P$ be a centrally symmetric point set. We claim that there exists a point $p \in P$ such that $p$ is contained in $\frac{n^2}{8}$ disks induced by $P$.

We find the point $p \in P$  as follows. Let $P_1=P$. For $i \in [1, \frac{n}{2}]$, let $a_i \in P_i$ be the point with maximum distance from the origin and let $b_i =-a_i$. The disk induced by $a_i$ and $b_i$ contains all the points of $P_i$. Otherwise, if there is a point $a_j\in P_i$ outside this disk then the distance from $a_j$ to origin is more than the distance from $a_i$ to origin, a contradiction.  Let $P_{i+1} = P_i\setminus \{a_i,b_i\}$. Since $b_i=-a_i$, $P_{i+1}$ is also centrally symmetric. Let $p \in P_{ n/2} $. Then $p$ has the desired property.

 Let $q \in P_{j+1}$ . Then we claim that $q$ is contained in at least $\frac{j^2}{2}$ induced disks. 

Let $i < j$. Clearly $q$ is contained in $C_{a_ib_i}$ and $C_{a_jb_j}$. We claim that $q$ is also contained in $C_{ab}$ where $a,b \in \{a_i,a_j,b_i,b_j\}$ and $C_{ab}$ is not $C_{a_ib_i}$ or $C_{a_jb_j}$. Assume for contradiction that this if false. Therefore $\angle a_iqa_j,\angle a_iqb_j, \angle b_iqa_j,\angle b_iqb_j$ are all acute. Consider the line segment joining $a_i$ and $q$. Let $h_a$ be the line  perpendicular to this line segment and passing through $q$. Let $H_a$ be the halfspace defined by $h_a$ containing the point $a_i$(See figure \ref{centrally}). Since angles $\angle a_iqa_j$ and $\angle a_iqb_j$ are acute, both $a_j$ and $b_j$ belong to $H_a$. Now consider the line segment joining $b_i$ and $q$. Define $H_b$ as before. By similar reasoning as before, $a_j$ and $b_j$ belong to $H_b$. Therefore, both $a_j$ and $b_j$ belong to $H_a \cap H_b$. This contradicts the fact that $\angle a_jqb_j$ is obtuse. Therefore, at least one of the angles $\angle a_iqa_j,\angle a_iqb_j, \angle b_iqa_j,\angle b_iqb_j$ is obtuse and the disk induced by the corresponding points contains $q$.


\begin{centering}
 \begin{figure}
\begin{minipage}{0.55\linewidth}

 \scalebox{0.5}{\input{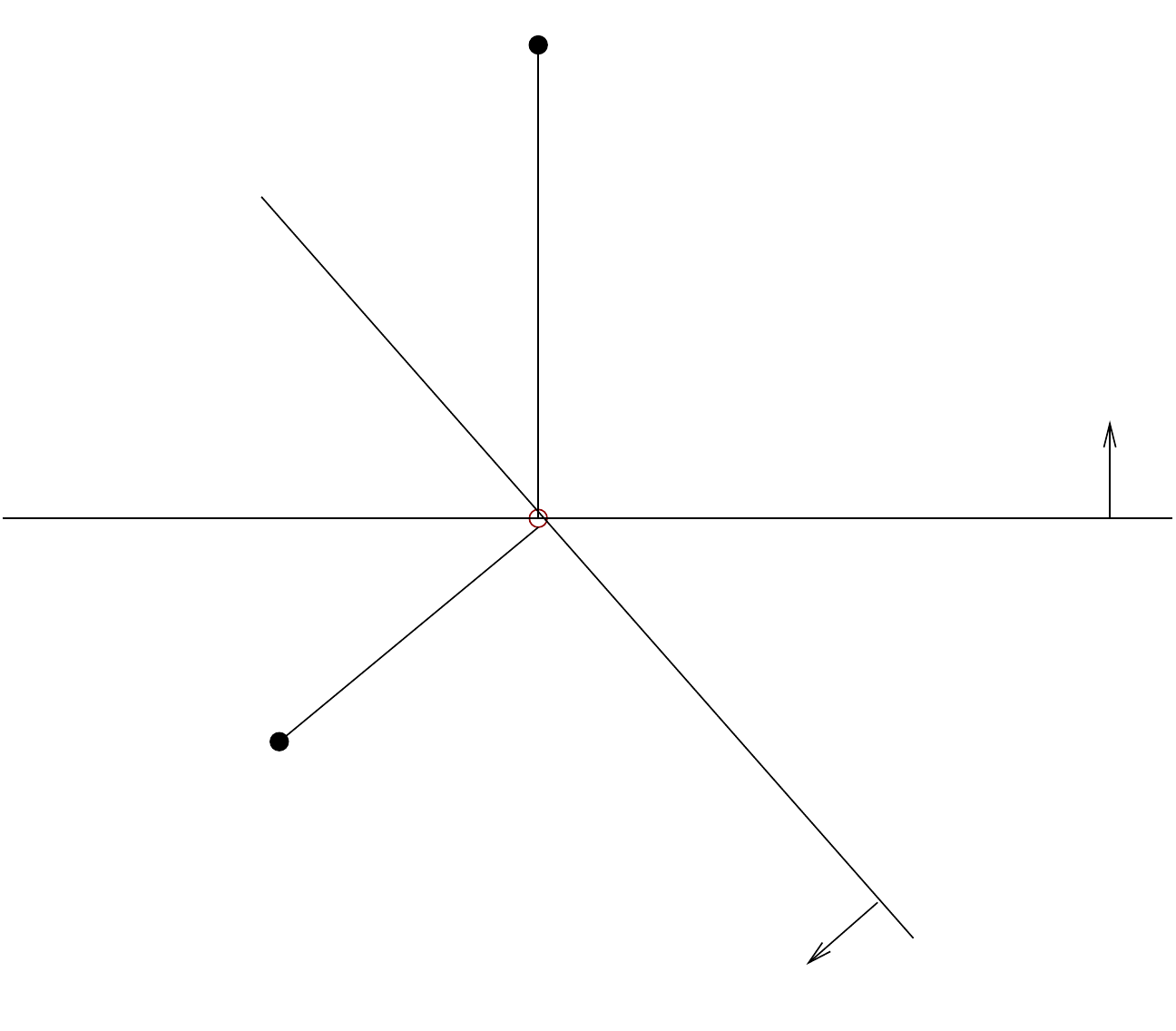_t}}
 
 \caption{Lower Bound for disks}
 \label{centrally}
\end{minipage}
\begin{minipage}{0.45\linewidth}

 \includegraphics[scale=0.5]{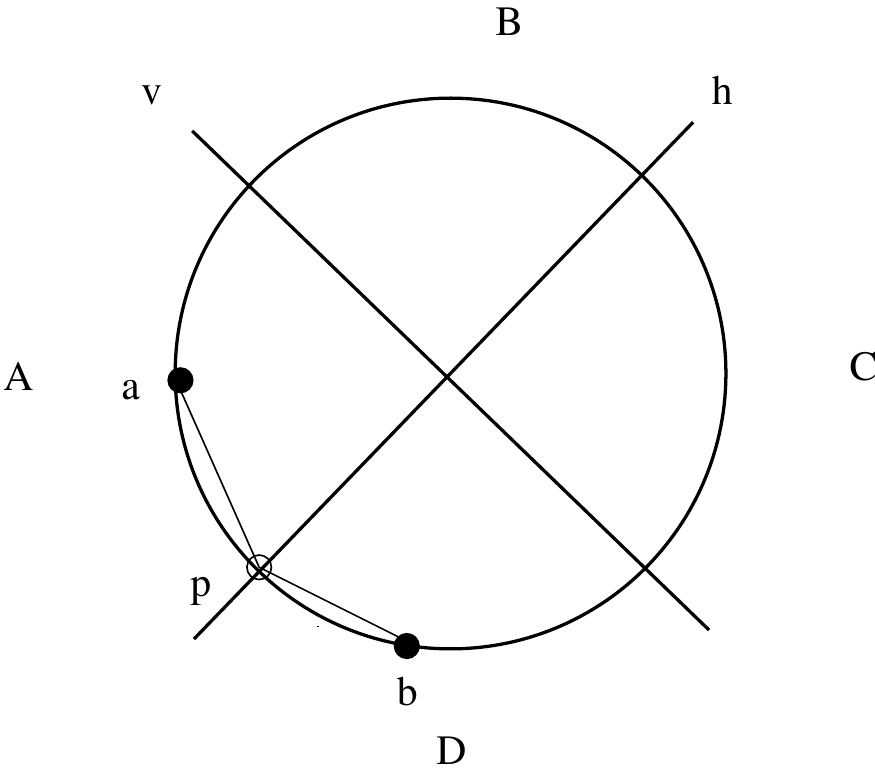}
\caption{Upper bound for Circles}
\label{cir_ub}

\end{minipage}
\end{figure}
\end{centering}
Therefore, $q$ is contained in all disks of the form $C_{a_ib_i}$ where $1\leq i \leq j$. Also, we just proved that for any $i,k \leq j$, $q$ is contained in at least one disk of the form $C_{ab}$ where $a\in \{a_k,b_k\}$ and $b\in\{a_i,b_i\}$. Therefore $q$ is contained in $\frac{j^2}{2}$ induced disks.

Since $p \in P_{n/2}$, $p$ is contained in $\frac{n^2}{8}$ induced disks.
\vspace{10pt}
\\

\textbf{Upper Bound}
\vspace{5pt}
\\
Consider $P$ as $n$ points arranged along the boundary of a circle. $P$ is a centrally symmetric point set. We claim that any point $p \in P$ is contained in at most $\frac{n^2}{8}$ induced disks.

Let $h$ be a straight line connecting $p$ and its diametrically opposite point and $v$ be a straight line perpendicular to $h$. Let $h$ and $v$ divide the plane into four quadrants as shown in figure \ref{cir_ub}. Let $a,b \in P$. If both $a$ and $b$ lie in the same side of $h$, $ \angle apb < 90$ and $p$ is not contained in the disk induced by $a$ and $b$. Therefore, assume that $a$ and $b$ lie on different sides of $h$. W.l.o.g assume that $a\in A\cup B$ and $b \in C \cup D$. Let $a$ be the $j$th point from $p$(clockwise) and $b$ be the $k$th from $p$(anti-clockwise), $j,k \in [1,\frac{n}{2}-1]$. It can be clearly seen that $\angle apb \geq 90$ when $j\in [1,\frac{n}{2}-k]$. Therefore, $p$ is contained in $1+2+...+\frac{n}{2}=\frac{n^2}{8}$ induced disks.

\end{proof}

\section{Second Selection lemma}
\label{ssl}
In the second selection lemma, we are given an arbitrary subset $\mathcal{S} \subseteq \mathcal{R}$ of size $m$. We bound 
the maximum number of induced rectangles of $\mathcal{S}$ that can be pierced by a single point $p$. The main idea of our 
approach is an elegant double counting argument, which we first illustrate for the special case of intervals in $\reals$.

\subsection{Second selection lemma for intervals in $\reals$}
Let $P = \{x_1, x_2, ..., x_n\}$ be a set of $n$ points in $\reals$. For any two points $p < q$ on the real line, we call
$[p, q]$ as the interval defined by the points $p$ and $q$. Let $C$ be the given set of $m$ intervals which are induced by
$P$, where $m \leq \binom{n}{2}$.

Let $J_c$ denote the number of points from $P$ present in an interval $c \in I$ and $I_p$ denote the number of intervals in $C$
containing the point $p$. Let us partition $C$ in such a way that, each interval with the point $x_i$ as its left endpoint is
placed in a set of intervals $X_i, \forall x_i \in P$. The intervals in $X_i$ are ordered by their right endpoint. Let each
$|X_i|$ be $m_i$ and hence $\sum_{i=1}^{n} m_i = m$.

\begin{lemma}
Let $P = \{x_1, ..., x_n\}$ be a set of $n$ points in $\reals$ and $C$ be a set of $m$ intervals induced from $P$. If 
$m=\Omega(n)$, then there exists a point $p \in P$ which is present in at least $\frac{m^2}{2n^2}+\frac{3m}{2n}$ intervals 
of $C$.
\label{lem:secondSelectionIntervals}
\end{lemma}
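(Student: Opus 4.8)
The plan is to set up a double counting argument on the quantity $\sum_{p \in P} I_p$, which counts incidences between points of $P$ and intervals of $C$. On one hand, this sum equals $\sum_{c \in C} J_c$, the total number of point-containments summed over intervals. On the other hand, I want to show that the maximum value $\max_{p} I_p$ cannot be too small, because if every point were in few intervals, the intervals would have to be ``short'' (contain few points of $P$), and short intervals cannot be too numerous without forcing a heavily-pierced point somewhere.

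Concretely, first I would recall the partition $C = \bigcup_{i=1}^n X_i$, where $X_i$ collects the intervals with left endpoint $x_i$, ordered by right endpoint, with $|X_i| = m_i$ and $\sum m_i = m$. The key observation is that if $[x_i, x_j]$ is the $k$-th interval in $X_i$ (ordered by right endpoint), then it contains at least $k$ points of $P$, namely $x_i$ together with the left endpoints (or right endpoints) of the first $k$ intervals in the order — so $J_c \ge k$ for the $k$-th interval of $X_i$. Summing over $X_i$ gives $\sum_{c \in X_i} J_c \ge 1 + 2 + \dots + m_i = \binom{m_i+1}{2}$, and hence $\sum_{c \in C} J_c \ge \sum_{i=1}^n \binom{m_i+1}{2}$. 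By convexity (Jensen / power-mean), $\sum_i \binom{m_i+1}{2}$ is minimized when all $m_i = m/n$, giving $\sum_{c\in C} J_c \ge n \cdot \binom{m/n + 1}{2} = \frac{m^2}{2n} + \frac{m}{2}$. Actually I should be a little careful about the exact constant — the bound stated is $\frac{m^2}{2n^2} + \frac{3m}{2n}$ for $\max_p I_p$, so after dividing the incidence count by $n$ I want $\sum_{c} J_c \ge \frac{m^2}{2n} + \frac{3m}{2}$; I expect the stronger lower bound $J_c \ge k+1$ for the $k$-th interval (counting both endpoints, or using that the interval strictly contains its interior lattice of spanning points) to supply the extra additive slack, and I will adjust the endpoint-counting accordingly.

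Then the finishing step is averaging: $\max_{p \in P} I_p \ge \frac{1}{n} \sum_{p \in P} I_p = \frac{1}{n} \sum_{c \in C} J_c \ge \frac{1}{n}\left( \frac{m^2}{2n} + \frac{3m}{2} \right) = \frac{m^2}{2n^2} + \frac{3m}{2n}$, which is the claimed bound, and the hypothesis $m = \Omega(n)$ guarantees the partition sizes behave well (and that the bound is meaningful / the convexity step is not vacuous). The main obstacle I anticipate is pinning down the exact per-interval lower bound $J_c \ge k$ versus $J_c \ge k+1$ for the $k$-th interval of $X_i$ and making the convexity minimization yield precisely the stated additive term $\frac{3m}{2n}$ rather than $\frac{m}{2n}$; this is a bookkeeping issue about whether endpoints are counted in $J_c$ and whether one orders by right endpoint ascending so that the $k$-th interval genuinely swallows $k$ distinct spanning points. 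Everything else — the double counting identity and the Jensen step — is routine.
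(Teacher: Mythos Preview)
Your proposal is correct and follows essentially the same route as the paper: partition $C$ by left endpoint into sets $X_i$, observe that the $j$-th interval (by right endpoint) in $X_i$ contains at least $j+1$ points (the paper uses exactly this, resolving your $k$ vs.\ $k{+}1$ worry), sum to get $\sum_{r\in X_i} J_r \ge \tfrac{m_i^2}{2}+\tfrac{3m_i}{2}$, apply Cauchy--Schwarz (your Jensen step is equivalent) to obtain $\sum_c J_c \ge \tfrac{m^2}{2n}+\tfrac{3m}{2}$, and finish by double counting and pigeonhole. The hypothesis $m=\Omega(n)$ is not actually used in the argument---it only ensures the bound is nontrivial---so you need not invoke it for the convexity step.
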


\begin{proof}
First, let us find the lower bound for the number of points present in all the intervals in $X_i$. In $X_i$, we can see that
the $j^{th}$ interval contains at least $j+1$ points. Thus, the summation of the number of points present in the intervals of $X_i$ 
is given by
\begin{equation*}
\begin{split}
\displaystyle\sum_{r \in X_i} J_r & \geq 2 + 3 + ... + (m_i+1) \geq \frac{m_i^2}{2} + \frac{3m_i}{2}
\end{split}
\end{equation*}

Each interval belongs to a unique $X_i$ and thus, the summation of the number of points present in the intervals in $C$ is lower 
bounded by summing over all $x_i$, the number of points present in each $X_i$.
\begin{equation*}
\begin{split}
\displaystyle\sum_{c \in I} J_c & \geq \displaystyle\sum_{i=1}^n (\frac{m_i^2}{2} + \frac{3m_i}{2}) \\
 & \geq \frac{\displaystyle\sum_{i=1}^n m_i^2}{2} + \frac{3 \cdot \displaystyle\sum_{i=1}^n m_i}{2}
\end{split}
\end{equation*}
Now, from the Cauchy-Schwarz inequality in $\reals^n$ we have, $(\sum_{j=1}^n m_j^2) \geq \frac{m^2}{n}$. \\
Thus, $\displaystyle\sum_{c \in I} J_c  \geq \frac{m^2}{2n} + \frac{3m}{2}$ \\

Now, the count we are achieving by summing over the number of points present in an interval $J_c$, can also be gotten
through by summing over the number of intervals containing a point $I_p$.
\begin{equation*}
\begin{split}
\displaystyle\sum_{c \in I} J_c & = \displaystyle\sum_{p \in P} I_p \\
\implies \displaystyle\sum_{p \in P} I_p & \geq \frac{m^2}{2n} + \frac{3m}{2}\\
\end{split}
\end{equation*}
By the pigeonhole principle, there exists a point $p \in P$ present in at least $\frac{m^2}{2n^2} + \frac{3m}{2n}$ intervals.
\end{proof}

\begin{lemma}
There exists a point set $P$ of size $n$ and a set of induced intervals $C$ of size $m \leq n^2(\sqrt{2}-1) - \frac{n}{\sqrt{2}}$, 
such that any point in $P$ is present in at most $\frac{m^2}{n^2}+\frac{3m}{\sqrt{2}n}$ intervals in $C$.
\label{lem:upperIntervals}
\end{lemma}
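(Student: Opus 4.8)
The plan is to take $C$ to be the family of all ``short'' induced intervals. Let $P=\{x_1,\dots,x_n\}\subset\reals$ with $x_1<\dots<x_n$, and for an integer threshold $k$ (fixed below) let $C=C_k$ consist of every interval $[x_i,x_j]$ with $j-i\le k$. Then $|C_k|=g(k):=\sum_{\ell=1}^{k}(n-\ell)=kn-\binom{k+1}{2}$, and since $g(k)-g(k-1)=n-k>0$ for $k<n$ the map $g$ is strictly increasing on $\{0,1,\dots,n-1\}$. Given the target size $m$ I would take the unique $k$ with $g(k)=m$; for an $m$ with $g(k-1)<m<g(k)$ one uses the same construction, taking all intervals of index difference $\le k-1$ together with $m-g(k-1)$ further intervals of index difference exactly $k$. (The statement is cleanest when $m=g(k)$; the general case needs a slightly more careful version of the count below but costs only lower-order terms.) In all cases every interval of $C$ has index difference at most $k$, which is all the argument uses.

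First I would bound, for an arbitrary $p=x_r\in P$, the number $I_p$ of intervals of $C$ containing $p$. An interval $[x_i,x_j]$ of index difference $\ell=j-i$ contains $x_r$ iff $i\le r\le j$, i.e.\ iff $x_i\in\{x_{r-\ell},\dots,x_r\}$; hence at most $\ell+1$ of the intervals of $C$ of index difference $\ell$ contain $x_r$, and therefore
\[
I_p\ \le\ \sum_{\ell=1}^{k}(\ell+1)\ =\ \frac{k^{2}+3k}{2}.
\]
This estimate throws away all boundary savings (it is attained only at points at least $k$ positions from both ends of $P$), which is exactly the source of the factor-$2$ gap against the lower bound of Lemma~\ref{lem:secondSelectionIntervals}; a finer count would sharpen the constants but make the statement far less clean.

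It remains to turn the hypothesis on $m$ into a bound on $k$. The key observation is that the assumed bound on $m$ is precisely the inequality $g(k)\ge \frac{kn}{\sqrt2}$: dividing by $k>0$, $g(k)=kn-\binom{k+1}{2}\ge \frac{kn}{\sqrt2}$ holds iff $n-\frac{k+1}{2}\ge\frac{n}{\sqrt2}$, i.e.\ iff $k\le (2-\sqrt2)n-1$, and a direct computation gives $g\big((2-\sqrt2)n-1\big)=n^{2}(\sqrt2-1)-\frac{n}{\sqrt2}$. Since $g$ is increasing, the hypothesis $m=g(k)\le n^{2}(\sqrt2-1)-\frac{n}{\sqrt2}$ forces $k\le (2-\sqrt2)n-1$, equivalently $m\ge\frac{kn}{\sqrt2}$, i.e.\ $k\le\frac{\sqrt2\,m}{n}$. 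Substituting into the display above and using $\frac{\sqrt2}{2}=\frac{1}{\sqrt2}$,
\[
I_p\ \le\ \frac{k^{2}}{2}+\frac{3k}{2}\ \le\ \frac12\left(\frac{\sqrt2\,m}{n}\right)^{2}+\frac32\cdot\frac{\sqrt2\,m}{n}\ =\ \frac{m^{2}}{n^{2}}+\frac{3m}{\sqrt2\,n},
\]
which is the claimed bound; since $p\in P$ was arbitrary, we are done.

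The one genuinely delicate point is the bookkeeping at the threshold: one must verify that the hypothesis on $m$ converts into \emph{exactly} $k\le\frac{\sqrt2\,m}{n}$ (so that both the quadratic and the linear term emerge with the stated constants), and one must make sure the reduction to all-short-intervals is legitimate when $m$ is not of the form $g(k)$ — here the only care needed is that the bound $I_p\le\sum_{\ell\le k}(\ell+1)$ uses only that every interval of $C$ has index difference at most $k$, not that all short intervals are present. Everything else is a routine summation.
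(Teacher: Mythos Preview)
Your construction is exactly the paper's: take $P$ in increasing order and let $C$ consist of all intervals of index difference at most $k=\frac{\sqrt2\,m}{n}$, then count that a middle point lies in $\sum_{\ell=1}^{k}(\ell+1)=\frac{k^2+3k}{2}=\frac{m^2}{n^2}+\frac{3m}{\sqrt2\,n}$ of them. Your derivation of the threshold $k\le\frac{\sqrt2\,m}{n}$ from the hypothesis on $m$ via $g(k)\ge\frac{kn}{\sqrt2}$ is in fact more carefully argued than the paper's version, which simply sets $k=\frac{\sqrt2\,m}{n}$ and checks $|C|\ge m$; both treatments share the same mild sloppiness about non-integer $k$ that you flag.
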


\begin{proof}
Let $P = \{x_1, x_2, \dots, x_n\}$ where $x_1 < x_2 < \dots < x_n$. Let $m$ be a multiple of $n$ and let $m_i = 
\frac{\sqrt{2}m}{n}$. Let the induced intervals from $C$ be of the form $[x_i, x_{i+1}], [x_i, x_{i+2}], ..., [x_i, x_{i+k}], 
\forall x_i \in P$, where $k = min(\frac{\sqrt{2}m}{n}, n-i)$. We now have, 
$|C| = (n - \frac{\sqrt{2}m}{n}) \cdot \frac{\sqrt{2}m}{n} + ((\frac{\sqrt{2}m}{n}-1) + \cdot \cdot \cdot + 1) = 
\sqrt{2}m - (\frac{m^2}{n^2} + \frac{m}{\sqrt{2}n})$. It is easy to see that $|C| \geq m$, when $m \leq n^2(\sqrt{2}-1) 
- \frac{n}{\sqrt{2}}$.

Let $B \subset P$ be the set of points, which exclude the first and the last $\frac{\sqrt{2}m}{n}$ points from $P$. Consider 
any point $x_p \in B$. Let us count the number of intervals containing $x_p$ i.e $I_{x_p}$.
\begin{figure}[h]
\centering
\input{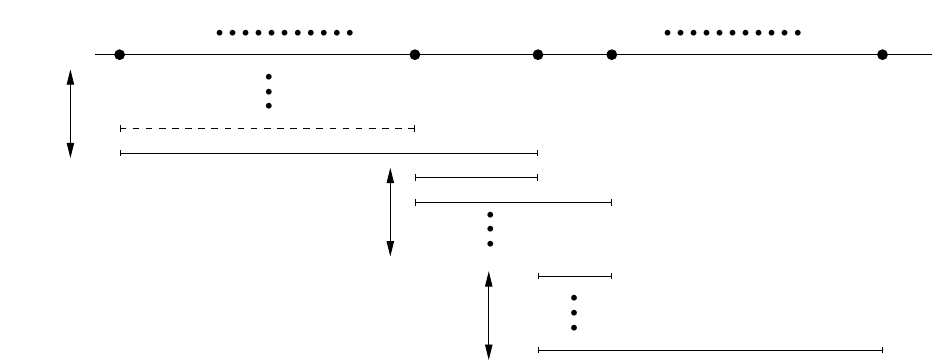_t}
\caption {The intervals which contain $x_p$ (Bold intervals)}
\label {fig:intUpperBound}
\end{figure}

From the figure \ref{fig:intUpperBound}, it can be seen that there is only one interval from 
$I_{x_{(p-\frac{\sqrt{2}m}{n})}}$ which contains $x_p$. This count of intervals containing $x_p$ increases by 1 for each 
consecutive point after $x_{p-\frac{\sqrt{2}m}{n}}$, until we reach $x_{p-1}$ and $x_p$, both of which have 
$\frac{\sqrt{2}m}{n}$ intervals containing $x_p$. \\Thus, we have
\begin{equation*}
\begin{split}
I_{x_p} & = 1 + 2 + ... + \frac{\sqrt{2}m}{n} + \frac{\sqrt{2}m}{n} \\
        & = \frac{m^2}{n^2} + \frac{3m}{\sqrt{2}n}
\end{split}
\end{equation*}
From our construction of $C$, it can be seen that any point $q \in P-B$ will be involved in lesser number of intervals and
thus, $|I_q| < \frac{m^2}{n^2}$. The bounds are tight upto a multiplicative constant.
\end{proof}

\vspace*{-0.1cm}
\subsection{Second selection lemma for Axis-Parallel Rectangles in $\reals^2$}
Let $P$ be a set of $n$ points in $\reals^2$. Let $\mathcal{S} \subseteq \mathcal{R}$ be any set of $m$ induced axis-parallel 
rectangles. In the second selection lemma, we bound the maximum number of induced rectangles of $\mathcal{S}$ that can be 
pierced by a single point $p$. The main idea of our approach is an elegant double counting argument.

Let $R(p,q)$ denote the rectangle induced by the points $p$ and $q$. $\mathcal{S}$ is partitioned into sets $X_i$ 
as follows : any rectangle $R(x_i,u) \in \mathcal{S}$ where $x_i, u \in P$, is added to the partition $X_i$ if $u$ is higher 
than $x_i$. Let $P_i = \{u  \vert R(x_i, u) \in X_i\}$. Let $|P_i|=|X_i|=m_i$. Any rectangle $R(x_i,u) \in X_i$ is placed in 
one of two sub-partitions, $X_i'$ or $X_i''$, depending on whether $u$ is to the right or left of $x_i$. Let $|X_i'| 
= m_i'$ and $|X_i''| = m_i''$. Similarly, we partition $P_i$ into $P_i'$ and $P_i''$. Let $\sum_{i=1}^n m_i' = m'$ and 
$\sum_{i=1}^n m_i'' = m''$. The rectangles in $X_i'$ (or $X_i''$) and the points in $P_i'$ (or $P_i''$) are ordered by 
decreasing y-coordinate.

We construct a grid out of $P$ by drawing horizontal and vertical lines through each point in $P$. Let the resulting set 
of grid points be $G$ ($P \subset G$), where $|G|=n^2$. We use the grid points in $G$ as the candidate set of points for 
the second selection lemma. 

Let $J_r$ be the number of grid points in $G$ present in any rectangle $r \in \mathcal{S}$. W.l.o.g consider the set of 
rectangles present in $X_i'$. We obtain a lower bound on $\sum_{r \in X_i'} J_r$. 

\begin{lemma}
\label{lemma:rectLower}
$\displaystyle \sum_{r \in X_i'} J_r \geq \frac{(m_i')^3}{6}$.
\end{lemma}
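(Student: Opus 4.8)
The plan is to evaluate $\sum_{r\in X_i'} J_r$ by turning it into an explicit sum indexed by a permutation, and then to invoke the rearrangement inequality. First I would write the points of $P_i'$ as $u_1,\dots,u_{m_i'}$ in decreasing order of $y$-coordinate, so that the $j$-th rectangle of $X_i'$ is $r_j=R(x_i,u_j)$; since every point of $P_i'$ lies strictly above and strictly to the right of $x_i$, each $r_j$ has $x_i$ as its lower-left corner and $u_j$ as its upper-right corner. The key structural observation is that the grid points of $G$ inside $r_j$ factorise as a product: a grid point is a vertical line through some $p\in P$ crossed with a horizontal line through some $q\in P$, so $J_{r_j}=a_j\,b_j$, where $a_j=|\{p\in P: x(x_i)\le x(p)\le x(u_j)\}|$ and $b_j=|\{q\in P: y(x_i)\le y(q)\le y(u_j)\}|$.

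Next I would lower-bound each factor using only the points $u_1,\dots,u_{m_i'}$. The points $u_j,u_{j+1},\dots,u_{m_i'}$ all have $y$-coordinate in $(y(x_i),y(u_j)]$, so $b_j\ge m_i'-j+1$. Likewise, letting $\rho_j$ denote the rank of $x(u_j)$ among $x(u_1),\dots,x(u_{m_i'})$, the $\rho_j$ points $u_k$ with $x(u_k)\le x(u_j)$ all lie in the $x$-range $(x(x_i),x(u_j)]$, so $a_j\ge\rho_j$; by general position the map $j\mapsto\rho_j$ is a permutation of $\{1,\dots,m_i'\}$. Combining these gives $\sum_{r\in X_i'} J_r=\sum_{j=1}^{m_i'}a_j b_j\ge\sum_{j=1}^{m_i'}\rho_j\,(m_i'-j+1)$.

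Finally, since $(m_i'-j+1)_{j}$ is decreasing in $j$, the rearrangement inequality shows that $\sum_j\rho_j(m_i'-j+1)$ is minimised over all permutations $\rho$ at $\rho_j=j$, giving $\sum_{j=1}^{m_i'}j\,(m_i'-j+1)=\binom{m_i'+2}{3}\ge\frac{(m_i')^3}{6}$, which is the claimed bound.

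I expect the only real obstacle to be conceptual rather than computational: one must resist bounding $a_j$ and $b_j$ by crude constants (that is exactly the interval argument of Lemma \ref{lem:secondSelectionIntervals}, and it only yields a quadratic total), and instead notice that the $x$-ranks $\rho_j$ sweep out a full permutation while the $y$-counts fall off linearly. It is precisely the product of a permutation with a decreasing linear sequence --- minimised via rearrangement --- that manufactures the cubic term $\frac{(m_i')^3}{6}$. Everything else (the factorisation of $J_{r_j}$, the two linear lower bounds, and the closed form for $\sum_j j(m_i'-j+1)$) is routine.
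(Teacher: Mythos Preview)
Your argument is correct and takes a genuinely different route from the paper's. The paper proves the lemma by induction on $m_i'$: it removes the topmost point $a_1$ of $P_i'$, but first argues via an exchange argument that if $a_1$ is not already the leftmost point then sliding its vertical grid line to the extreme left does not increase $\sum_r J_r$; once $a_1$ is both topmost and leftmost, removing it and re-inserting it contributes at least $\frac{k(k+1)}{2}+(2k+1)$ grid points, and this together with the inductive bound $\frac{(k-1)^3}{6}$ yields $\frac{k^3}{6}$.

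Your approach sidesteps the induction entirely. By writing $J_{r_j}=a_j b_j$ and noting that $b_j\ge m_i'-j+1$ while $a_j\ge\rho_j$ for a permutation $\rho$, you reduce the problem to minimising $\sum_j \rho_j(m_i'-j+1)$ over permutations, which the rearrangement inequality handles in one stroke and gives the closed form $\binom{m_i'+2}{3}$. In effect, the paper's exchange step (``moving $a_1$ to the left does not increase the sum'') is a bare-hands proof of the relevant instance of rearrangement; you recognise this and invoke the inequality directly. Your version is shorter and makes the extremal configuration (the monotone staircase $\rho_j=j$) transparent from the outset, while the paper's version is fully self-contained and does not appeal to a named inequality. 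Either way the content is the same: the minimum occurs when the $x$-order and $y$-order of $P_i'$ agree.
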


\begin{proof}
Let $c = \sum_{r \in X_i'} J_r$. We prove the lemma by induction on the size of $m_i'$. For the base case, let $m_i' = 2$. There 
are only two ways in which the point set can be arranged, as shown in figure \ref{fig:rectLower}(a). It can be seen that the 
statement is true for the base case.

\begin{figure}[h]
\centering
\input{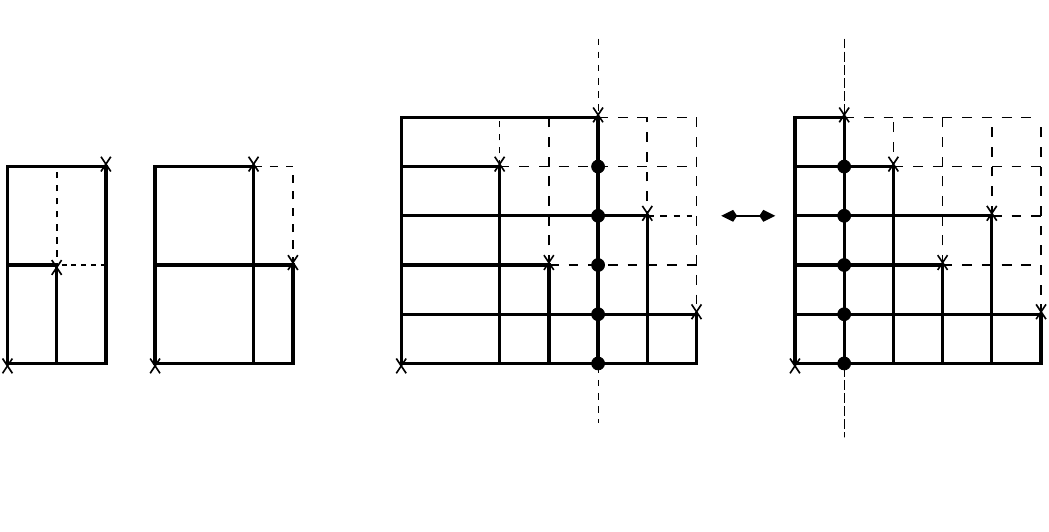_t}
\caption {The dotted lines represent the grid lines and the solid lines represent the rectangle edges. (a) Base cases. 
(b) Inductive case - the case when $a_1$ is not the leftmost point in $P_i'$.}
\label {fig:rectLower}
\end{figure}

For the inductive case, assume that the statement is true for $m_i' = k-1$ and let $m_i'= k$. Let 
$P_i' = \{a_1, a_2, ..., a_k\}$. Let $a_1$ be the topmost point in $P_i'$ as seen in figure \ref{fig:rectLower}(b) and 
$l$ be the vertical line passing through $a_1$. We have 2 cases : \\

\noindent Case 1 : If $a_1$ is the leftmost point in $P_i'$, then we remove $a_1$ from $P_i'$ and $R(x_i,a_1)$ from $X_i'$.
By the induction hypothesis, the lemma is true for the remaining $k-1$ points. On adding $a_1$ back, we see that the line 
$l$ contributes $k$ grid points to the next rectangle in $X_i'$, $R(x_i,a_2)$. This contribution of grid points by $l$ 
becomes $k-1$ for the next rectangle $R(x_i, a_3)$ and decreases by one as we move through the ordered set $X_i'$ and it 
is two for $R(x_i, a_k)$. Thus, the total number of points contributed by $l$ to $c$ is given by $\frac{k(k+1)}{2}-1$. The 
rectangle $R(x_i, a_1)$ also contributes $2k+2$ to $c$. Thus, $c \geq \frac{(k-1)^3}{6} + \frac{k(k+1)}{2} + (2k + 1) \geq 
\frac{k^3}{6}$. Thus, the statement is true for $m_i'=k$. \\ 

\noindent Case 2 : If $a_1$ is not the leftmost point, then we claim that $c$ does not increase when we make $a_1$ as the 
leftmost point by moving line $l$ to the left. To see this, refer figure \ref{fig:rectLower}(b) where the grid points on 
$l$ are shown as solid circles. Let $j$ be the number of points from $P_i'$ present to the left of $l$. When we make the point 
$a_1$ as the leftmost by moving $l$ to the left, we see that 

\begin{itemize}
\item The rectangles induced by $x_i$ and the points to the left of $l$ have an increase in the number of grid points, which 
is contributed by $l$. Thus, $c$ increases by $t \leq k+(k-1)+...+(k-j+1)=\frac{j(2k+1-j)}{2}$.
\item $R(x_i,a_1)$ loses $d = (j+2)(k+1)-2(k+1)=j(k+1)$ points. Thus, $c$ decreases by $d$.
\item The number of grid points in the rectangles induced by $x_i$ and the points to the right of $l$ remains the same.
\end{itemize}
By a simple calculation we can see that $d \geq t$. Thus, when $a_1$ is moved to the left, $c$ does not increase. As $a_1$ is 
now the leftmost point, we can apply case 1 and show that the lemma is true for $m_i'= k$.
\end{proof}

\begin{theorem}
\label{thm:secondRectangles}
Let $P$ be a point set of size $n$ in $\reals^2$ and let $\mathcal{S}$ be a set of induced rectangles of size $m$. If $m =
\Omega(n^{\frac{4}{3}})$, then there exists a point $p \in G$ which is present in at least $\frac{m^3}{24n^4}$ rectangles of 
$\mathcal{S}$.
\end{theorem}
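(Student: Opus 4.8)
The plan is to combine Lemma~\ref{lemma:rectLower} with a double counting argument over the grid $G$, invoking convexity of $t\mapsto t^3$ twice so that no constant factor is lost along the way.

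First I would apply Lemma~\ref{lemma:rectLower} to each sub-partition. For every $i\in[n]$ we get $\sum_{r\in X_i'}J_r\geq (m_i')^3/6$, and by the mirror-symmetric argument (reflecting left and right) also $\sum_{r\in X_i''}J_r\geq (m_i'')^3/6$; the degenerate cases $m_i',m_i''\in\{0,1\}$ hold trivially since each such rectangle contains at least two grid points. Since $\{X_i',X_i''\}_{i=1}^n$ is a partition of $\mathcal S$, summing over all $i$ yields
\[
\sum_{r\in\mathcal S}J_r\;\geq\;\frac{1}{6}\sum_{i=1}^n\Bigl((m_i')^3+(m_i'')^3\Bigr).
\]

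Next I would apply the power-mean (Jensen) inequality to the $n$ numbers $m_1',\dots,m_n'$, obtaining $\sum_{i}(m_i')^3\geq (\sum_i m_i')^3/n^2=(m')^3/n^2$, and similarly $\sum_i(m_i'')^3\geq (m'')^3/n^2$. Then I would use convexity once more, this time on the single constraint $m'+m''=m$: the function $x\mapsto x^3+(m-x)^3$ is convex and minimized at $x=m/2$, so $(m')^3+(m'')^3\geq m^3/4$. Chaining the three estimates gives $\sum_{r\in\mathcal S}J_r\geq \frac{m^3}{24n^2}$. Finally, double counting the incidences between grid points and rectangles, $\sum_{r\in\mathcal S}J_r=\sum_{g\in G}\bigl(\#\{r\in\mathcal S:\,g\in r\}\bigr)$, and $|G|=n^2$, so by the pigeonhole principle some $p\in G$ lies in at least $\frac{m^3}{24n^4}$ rectangles of $\mathcal S$; the hypothesis $m=\Omega(n^{4/3})$ is exactly what makes this count at least $1$ and hence meaningful.

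I do not expect a serious obstacle here: the genuinely hard combinatorial content is already packaged in Lemma~\ref{lemma:rectLower}. The one point that needs care is the two-level partition: working with the coarse partition $\{X_i\}$ alone would force a weaker per-block bound, and it is precisely the refinement into left/right sub-blocks $X_i',X_i''$ together with the elementary inequality $(m')^3+(m'')^3\geq m^3/4$ that recovers the stated constant $1/24$. Everything else is bookkeeping plus two routine applications of convexity.
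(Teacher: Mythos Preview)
Your proposal is correct and follows essentially the same route as the paper: apply Lemma~\ref{lemma:rectLower} to each $X_i'$ and $X_i''$, sum, use the power-mean/H\"older inequality to pass from $\sum_i (m_i')^3$ to $(m')^3/n^2$, minimize $(m')^3+(m'')^3$ at $m'=m''=m/2$, and finish by double counting over the $n^2$ grid points. The paper omits your explicit remarks on the degenerate cases and on the role of the $\Omega(n^{4/3})$ hypothesis, but the argument is otherwise identical.
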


\begin{proof}
The summation of the number of grid points present in the rectangles in $X_i$ is given by $\sum_{r \in X_i} J_r 
= \sum_{r \in X_i'} J_r + \sum_{r \in X_i''} J_r$. Using the lower bound from lemma \ref{lemma:rectLower} we have, 
$\sum_{r \in X_i} J_r \geq \frac{(m_i')^3 + (m_i'')^3}{6}$.

Since $\mathcal{S}$ is partitioned into the sets $X_i$, the summation of the number of grid points present in the rectangles 
in $\mathcal{S}$ is given by
\begin{equation*}
\sum_{r \in \mathcal{S}} J_r = \sum_{i=1}^n \sum_{r \in X_i} J_r \geq \left( \displaystyle \sum_{i=1}^n (m_i')^3 + \sum_{i=1}^n 
                                  (m_i'')^3 \right) /6
\end{equation*}
Using H\"older's inequality in $\reals^n$ (generalization of the Cauchy-Schwartz inequality), we have $\sum_{i=1}^n (m_i')^3 
\geq \frac{(m')^3}{n^2}$. Thus, we get $\sum_{r \in \mathcal{S}} J_r \geq \frac{(m')^3 + (m'')^3}{6n^2}$. This sum is minimized 
when $m' = m'' = \frac{m}{2}$ and thus, $\sum_{r \in \mathcal{S}} J_r \geq \frac{m^3}{24n^2}$.

Let $I_g$ be the number of rectangles of $\mathcal{S}$ containing the grid point $g \in G$. Now, by double counting, we have
\begin{equation*}
\begin{split}
\displaystyle \sum_{g \in G} I_g = \sum_{r \in \mathcal{S}} J_r \implies  \sum_{g \in G} I_g \geq \frac{m^3}{24n^2} 
\end{split}
\end{equation*}

By pigeonhole principle, there exists a grid point $p \in G$ which is present in at least $\frac{m^3}{24n^4}$ rectangles in 
$\mathcal{S}$.
\end{proof}

\subsection{Second selection lemma for other objects in $\reals^2$}
In this section, we look at the second selection lemma for objects like skylines and downward facing equilateral 
triangles.

Smorodinsky and Sharir~\cite{SHA04} proved tight bounds for the second selection lemma for disks. They used the planarity of the Delaunay graph (w.r.t circles) to prove that there 
exists a point $p \in P$ which is present in at least $\Omega(\frac{m^2}{n^2}) $ disks of $D$. It is not hard to see that this result applies for all objects whose Delaunay graph is planar.

\subsubsection{Skylines}
Let $\mathcal{K'} \subseteq \mathcal{K}$ be a set of $m$ skylines induced by $P$. It can be easily seen that the Delaunay graph
w.r.t skylines is planar. We can directly use the result in~\cite{SHA04} to get upper and lower bounds on the second selection 
lemma for induced skylines.
\begin{lemma}
There exists a point $p \in P$, which is present in $\Omega(\frac{m^2}{n^2})$ skylines induced by $P$. This bound is 
asymptotically tight.
\end{lemma}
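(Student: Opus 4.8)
The plan is to obtain both directions almost for free from the Smorodinsky--Sharir framework~\cite{SHA04}, the only genuinely new ingredient being planarity of the relevant Delaunay graph. Let $D_{\mathcal K}(Q)$ denote, for $Q\subseteq P$, the graph on vertex set $Q$ in which $pq$ is an edge iff the skyline induced by $p$ and $q$ contains no point of $Q$ other than $p$ and $q$; since the inclusion-minimal skyline (bottomless rectangle) containing two points is precisely the one they induce, this is equivalent to $\mathrm{sky}(p,q)\cap Q=\{p,q\}$. Granting that every such $D_{\mathcal K}(Q)$ is planar, the random-sampling argument of~\cite{SHA04} goes through verbatim: retain each point of $P$ independently with probability $\rho$, note that a skyline of $\mathcal K'$ having $k$ points of $P$ in its interior survives as a Delaunay edge of the sample with probability exactly $\rho^2(1-\rho)^k$, bound the expected number of surviving Delaunay edges by $3\rho n$ using planarity and Euler's formula, take expectations over all of $\mathcal K'$, and optimise $\rho=\Theta(n^2/m)$; this produces a point $p\in P$ lying in $\Omega(m^2/n^2)$ skylines of $\mathcal K'$, valid whenever $m=\Omega(n)$ (so that $\rho\le 1$). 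This part of the proof uses nothing about circles beyond planarity of their Delaunay graph, exactly as remarked in the text.

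So the only step requiring real work is proving $D_{\mathcal K}(P)$ planar. First I would record a clean description of its edges: sorting $P$ as $p_1,\dots,p_n$ by $x$-coordinate, the skyline induced by $p_i$ and $p_j$ ($i<j$) is the bottomless rectangle $[x_{p_i},x_{p_j}]\times(-\infty,\max(y_{p_i},y_{p_j})]$, so $p_ip_j$ is an edge iff every $p_k$ with $i<k<j$ satisfies $y_{p_k}>\max(y_{p_i},y_{p_j})$. Then I would draw the planar embedding explicitly: realise each edge $p_ip_j$ by the $L$-shaped path that runs vertically from the lower endpoint up to height $\max(y_{p_i},y_{p_j})$ and then horizontally to the higher endpoint. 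Each such path lies inside the corresponding point-free skyline; its horizontal leg sits at height $\max(y_{p_i},y_{p_j})$ and its lowest point at height $\min(y_{p_i},y_{p_j})$. A short case analysis on the $x$-order of the four endpoints of two edges then shows the paths are disjoint (up to a shared endpoint): if the two $x$-intervals are nested, the endpoints of the inner edge lie strictly between those of the outer edge in $x$-order, hence both strictly above the top of the outer edge's skyline, so the inner path---which never dips below the lower of its two endpoints---stays strictly above the outer path, which never rises above the top of its skyline; if the two $x$-intervals interleave properly, one endpoint of each edge lies strictly between the two endpoints of the other, and the two edge conditions then assert that each of these two points is strictly above the other, which is impossible. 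The subtlety worth flagging is that, unlike disks, bottomless rectangles are \emph{not} pseudo-disks---the set-difference of two of them may be disconnected---so there is no off-the-shelf lemma to invoke and the embedding must be produced by hand; this is the main (if modest) obstacle.

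For the matching upper bound, establishing asymptotic tightness, I would reduce to the interval construction of Lemma~\ref{lem:upperIntervals}. Place the $n$ points on an upward parabola, $p_i=(i,i^2)$; then for $i<k<j$ one has $y_{p_k}=k^2<j^2=\max(y_{p_i},y_{p_j})$, so $p_k$ lies in $\mathrm{sky}(p_i,p_j)=[i,j]\times(-\infty,j^2]$ precisely when $i\le k\le j$. Thus the incidence pattern between $P$ and its induced skylines is identical to that between $\{1,\dots,n\}$ and intervals, and taking $\mathcal K'=\{\mathrm{sky}(p_i,p_j):1\le i<j\le n,\ j-i\le t\}$ with $t=\Theta(m/n)$ gives $|\mathcal K'|=\Theta(m)$, while every point (indeed every point of the plane over the relevant $x$-range) is contained in $\Theta(t^2)=\Theta(m^2/n^2)$ of these skylines, exactly as in Lemma~\ref{lem:upperIntervals}. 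Hence no point is in more than $O(m^2/n^2)$ skylines of $\mathcal K'$, and the $\Omega(m^2/n^2)$ bound is asymptotically tight.
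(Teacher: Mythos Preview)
Your approach is exactly the paper's: the paper simply asserts that the skyline Delaunay graph is planar and then invokes~\cite{SHA04} for both the lower and upper bounds, without further detail. You supply the planarity argument the paper omits, and your key observation---that two Delaunay edges cannot properly interleave in $x$-order---is correct and already forces outerplanarity (place the vertices on a circle in $x$-order and draw edges as chords; non-interleaving means non-crossing). One small wrinkle in your explicit $L$-path drawing: when two edges share an endpoint that happens to be the \emph{lower} vertex of both, the two $L$-paths overlap along a whole vertical segment, not just at the shared point (e.g.\ $p_i=(0,2)$, $p_j=(1,0)$, $p_l=(2,3)$, edges $p_ip_j$ and $p_jp_l$); the chord drawing sidesteps this, so the conclusion stands. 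Your parabola construction for tightness is a clean, self-contained reduction to the interval case of Lemma~\ref{lem:upperIntervals}, where the paper merely cites~\cite{SHA04}.
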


\subsubsection{Downward facing equilateral triangles}
Let $\mathcal{T}$ be the set of all downward facing equilateral triangles or down-triangles induced by $P$. Such a triangle is 
induced by two points where the side parallel to the $x$-axis passes through one of the points and the corner opposite to this 
side lies below it. The other inducing point is present on one of the other 2 sides. Let $\mathcal{T'} \subseteq \mathcal{T}$ be a 
set of $m$ induced down-triangles.~\cite{BAMS12} proved that the Delaunay graph w.r.t to down-triangles is planar. Thus, we 
can apply the result in~\cite{SHA04} directly to get upper and lower bounds.
\begin{lemma}
There exists a point $P \in P$, which is present in $\Omega(\frac{m^2}{n^2})$ down-triangles induced by $P$. This bound is 
asymptotically tight.
\end{lemma}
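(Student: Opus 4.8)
The plan is to read this off from the abstract second-selection machinery of Smorodinsky and Sharir~\cite{SHA04}: their Crossing-Lemma-style argument uses only the planarity of the relevant Delaunay graph, after which it delivers a point of $P$ contained in $\Omega(m^2/n^2)$ of the given objects, together with a matching construction. So the only input to verify is that the Delaunay graph of $P$ with respect to down-triangles --- two points joined whenever some down-triangle of the fixed orientation has them on its boundary and contains no other point of $P$ in its interior --- is planar, and this is exactly the result of~\cite{BAMS12}. The remaining paragraphs sketch what that plug-in entails.

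For the lower bound, I would proceed as follows. Planarity of the Delaunay graph implies that over any $N$-point set the number of \emph{empty} induced down-triangles (those containing no point of the set in their interior) is $O(N)$, since distinct empty down-triangles correspond to distinct defining pairs of points, hence to distinct edges of a planar graph on $N$ vertices. Each induced down-triangle is determined by two points, and its being empty with respect to a subset depends only on which of its two defining points and which of its interior points lie in that subset; feeding the linear level-$0$ bound into the Clarkson--Shor random-sampling argument therefore yields that the number of induced down-triangles of $P$ with at most $k$ points of $P$ in the interior is $O(nk)$ for every $k \ge 1$. Given the family $\mathcal{T}'$ of size $m$ (and assuming $m = \Omega(n)$, the statement being trivial otherwise), pick $k = \Theta(m/n)$ small enough that strictly fewer than $m/2$ members of $\mathcal{T}'$ have at most $k$ interior points of $P$; then at least $m/2$ triangles of $\mathcal{T}'$ each have more than $k$ interior points. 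Double counting the incidences $(p,t)$ with $p \in P$ interior to $t \in \mathcal{T}'$, there are at least $(m/2)\cdot k = \Omega(m^2/n)$ of them, so averaging over the $n$ points of $P$ gives a point $p \in P$ interior to $\Omega(m^2/n^2)$ triangles of $\mathcal{T}'$. (Alternatively one runs the sampling argument of~\cite{SHA04} verbatim: if every point of $P$ were in fewer than $K$ triangles of $\mathcal{T}'$, sample each point of $P$ independently with probability $q \sim m/(nK)$, bound the expected number of surviving empty down-triangles above by $O(qn)$ via planarity and below via Jensen's inequality using $\sum_{t} w_t < nK$, where $w_t$ is the number of points of $P$ interior to $t$, and deduce $K = \Omega(m^2/n^2)$.)

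For the upper bound, i.e.\ asymptotic tightness, I would exhibit a point set together with a family of $m$ induced down-triangles in which no point is pierced by more than $O(m^2/n^2)$ of them, by adapting the tightness constructions for disks and boxes in~\cite{CHA94,SHA04} (or the one-dimensional construction of Lemma~\ref{lem:upperIntervals}): arrange $P$ so that the chosen down-triangles form a balanced ``$k$-level'' family with $k = \Theta(m/n)$, so that each point lies interior to only $\Theta(m/n)$ of them for any fixed horizontal position and $\Theta(m^2/n^2)$ in total.

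The main obstacle is really the planarity of the down-triangle Delaunay graph, since that is the geometric ingredient on which both halves of the argument rest; here it comes for free from~\cite{BAMS12}, so in our setting the work reduces to (i) checking that down-triangles satisfy the combinatorial hypotheses of the Clarkson--Shor / Smorodinsky--Sharir scheme --- in particular being determined by two points --- which is immediate, and (ii) transferring the matching lower-bound construction, which is routine.
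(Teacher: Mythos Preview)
Your proposal is correct and follows essentially the same approach as the paper: cite~\cite{BAMS12} for planarity of the down-triangle Delaunay graph, then invoke the Smorodinsky--Sharir machinery of~\cite{SHA04} for both the $\Omega(m^2/n^2)$ lower bound and the tightness construction. In fact the paper's own treatment is even terser than yours --- just two sentences pointing to these two references --- so your elaboration of the Clarkson--Shor/sampling argument and the matching construction is additional detail rather than a different route.
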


\section{Conclusion}

In this paper, we have studied selection lemma type questions for various geometric objects.
We have proved exact results for both the strong and weak variants of the first selection lemma for 
axis-parallel rectangles and special subclasses like quadrants and slabs.
For the weak variant of the first selection lemma for axis-parallel boxes in $\reals^d$ though, there is a wide gap
between our lower bounds ($\frac{n^2}{2^{(2^d-1)}}$) and our upper bounds ($\frac{n^2}{2^{d+1}}$), which needs to be tightened.
We have shown non trivial bounds for the weak variant of first selection lemma for induced hyperspheres. Finding the exact constant
is an interesting open problem. Another open problem is to find non-trivial bounds for the strong variant of first selection lemma for boxes
and hyperspheres in higher dimensions.


For the second selection lemma for axis-parallel rectangles, we have proved a lower bound of $\frac{m^3}{24n^4}$ which is
a better bound than~\cite{SHA04}, when $m = \Omega(\frac{n^2}{\log^2n})$. An
interesting open problem, as mentioned in \cite{SHA04}, is to tighten the polylogarithmic gap between these lower and upper bounds.

\section*{Acknowledgements}
We would like to thank Neeldhara Misra and Saurabh Ray for helpful discussions.

\bibliographystyle{abbrv}

\end{document}